\newtheorem{theorem}{Theorem}
\newtheorem{lemma}{Lemma}
\newtheorem{definition}{Definition}
\newtheorem{problem}{Problem}
\DeclareMathOperator*{\argmax}{argmax}
\newcommand{\set}[1]{\left\{ #1 \right\}}
\newcommand{\card}[1]{\left| #1 \right|}
\newcommand{\Mbot}{\textbf{Multi\_Bot}}
\newcommand{\cmax}{c_{\mbox{\scriptsize{max}}}}
\newcommand{\dmax}{d_{\mbox{\scriptsize{max}}}}
\newcommand{\HMaxopt}{H^*}
\newcommand{\HMaxoptalign}{H^*}
\newcommand{\Xopt}{\mathbf{x}^*}
\newcommand{\Piopt}{\bm{\pi}^*}
\newcommand{\MbotOne}{\textbf{Multi\_Bot\_1P}}
\newcommand{\MbotOnel}[2]{\textbf{Multi\_Bot\_1P}[#1, #2]}
\newcommand{\IMS}{\textbf{IMS}}
\newcommand{\IMSlong}{\textbf{Identical-machines scheduling}}
\newcommand{\vol}{\mbox{vol}}
\newcommand{\volit}{\emph{vol}}
\newcommand{\sca}[1]{\mbox{scale}_{#1}}
\newcommand{\scait}[1]{\emph{scale}_{#1}}
\newcommand{\lpt}{\textsc{Lpt-first}}
\newcommand{\Ipi}{\mathcal{I}({\bm{\pi}})}
\newcommand{\tablebot}[1]{\texttt{Table}[#1]}
\newcommand{\boolbot}[1]{\texttt{Bool}[#1]}
\newcommand{\packbot}[1]{\texttt{Pack}[#1]}
\newcommand{\K}{\mathcal{K}}
\newcommand{\T}{\mathcal{T}}
\newcommand{\p}{\mathcal{P}}
\newcommand{\B}{\mathcal{B}}
\newcommand{\I}{\mathcal{I}}
\newcommand{\True}{\mbox{True}}
\newcommand{\False}{\mbox{False}}
\title{Approximation Algorithms for Job Scheduling with Reconfigurable Resources}
\author[1]{Pierre Berg\'e}
\author[2]{Mari Chaikovskaia}
\author[1]{Jean-Philippe Gayon}
\author[1]{Alain Quilliot}
\affil[1]{Université Clermont-Auvergne, CNRS, Mines de Saint-Etienne,
Clermont-Auvergne-INP, LIMOS, 63000 Clermont-Ferrand, France}
\affil[2]{IMT Atlantique, LS2N, UMR CNRS 6004, F-44307 Nantes, France}
\date{}
\begin{document}

%\tableofcontents

\maketitle

\begin{abstract}
We consider here the \Mbot\ problem for the scheduling and the resource parametrization of jobs related to the production or the transportation of different products inside a given time horizon. Those jobs must meet known in advance demands. The time horizon is divided into several discrete identical periods representing each the time needed to proceed a job. The objective is to find a parametrization and a schedule for the jobs in such a way they require as less resources as possible. Though this problem derived from the applicative context of reconfigurable robots, we focus here on fundamental issues. We  show that the resulting strongly NP-hard \Mbot\  problem may be handled in a greedy way with an approximation ratio of $4/3$.  
\end{abstract}

\textbf{Keywords :} Reconfiguration, Scheduling, Approximation

\section{Introduction}

In many industrial contexts, automatized production must adapt itself to a fast evolving demand of a large variety of customized products. One achieves such a flexibility requirement through the notion of \textit{reconfiguration}. Once an operation is performed, related either to the production of some good or to its transportation, one may redesign the infrastructure that supported this operation by adding, removing or replacing some atomic components, or by modifying the links that connect those components together. Those components may be hardware (robots, instruments), software or human resources. They behave as renewable resources  \cite{becsikci2015multi,boysen2022assembly,hartmann2022updated} and move inside the production area in order to fit, during a given production cycle, with current production/transportation needs. Depending on the way one assigns those resources to a given operation, one may not only achieve this operation but also speed it, increase its throughput or lessen its cost, as in the \textbf{multi-modal Resource Constrained Project Scheduling Problem} \cite{becsikci2015multi}. 

We consider the following scheduling problem with reconfigurable resources. Several types of jobs (e.g. production operations, transportation tasks) have to be processed by a set of identical resources (e.g. workers, robots, processors) over a discrete time horizon in order to achieve a certain demand. Assigning a number $p$ of resources to some job of type $k$ gives a certain production $c_{pk}$: all production values, called \textit{capacities}, are given as inputs. In each time period, teams of resources must be formed to process jobs. 
A resource which is used to perform some type of job $k$ at period $t$ may be employed for another type of job $k'\neq k$ in the next period $t+1$.
%The number of jobs of a given type that can be processed during a period depends on the number of resources dedicated to this type of job during this period. 
The objective is to determine the minimum number of resources needed to obtain a certain production for each type of job. This problem is called \Mbot\ and is strongly NP-hard~\cite{chaikovskaia23thesis}. 

It has been firstly introduced in a warehouse logistics context  \cite{chaikovskaia2022sizing} in collaboration with the MecaBoTix company~\cite{mecabotix} which designs reconfigurable mobile robots. In this application, resources are mobile robots and jobs consist in moving loads of various types, such as pallets or boxes. Those robots are capable of aggregating into a cluster to form poly-robots that can adapt to the type of product (size/mass) and navigate independently in environments such as warehouses, production sites or construction sites. Other applications can be found in the automotive industry where the resources are workers and the processing time for a task in the assembly line depends on the number of workers assigned to it \cite{battaia2015workforce}. 

From a theoretical point of view, \Mbot\ is strongly related to some classical scheduling problems of the literature. A very well-known one is \IMSlong \, (\IMS), where the objective is to pack items into a set of identical boxes while minimizing the size of the most filled box~\cite{graham69}. The standard scheduling notation of \IMS\ is $P\vert \vert C_{\max}$. Its high multiplicity variant, denoted by $P\vert HM(n)\vert C_{\max}$, encodes as binary inputs the number of items with the same size~\cite{brauner05}. These two scheduling problems have been widely studied in terms of approximation and parameterized complexity~\cite{brinkop22,coffmangj78,hochbaum85,knop20,mccormick2001polynomial,mnich18,mnich15}. Problem $P\vert HM(n)\vert C_{\max}$, when the item sizes are polynomially bounded, is a special case of \Mbot : consider that any type of job $k$ correspond to some item size and can be performed only by a specific number of resource $p_k$, the demands of jobs of type $k$ correspond to the high-multiplicity coefficients of the related items. In this article, we will use a heuristic of \IMS\ as a sub-routine of our algorithm.

The main result of this paper is the presentation of a polynomial-time $\frac{4}{3}$-approximation algorithm for \Mbot . The impact of this contribution is, in our opinion, twofold. On one hand, we provide, for industrial applications such as MecaBotiX robots, a fast and efficient heuristic with the strict guarantee that it will not fail on pathological instances more than 33\% over the optimum. On the other hand, we extend the theoretical knowledge on approximability of scheduling problems, showing that a generalization of $P\vert HM(n)\vert C_{\max}$ admits a constant approximation ratio.

The paper is organized as follows. In Section \ref{sec:notation}, we detail the \Mbot\ problem, remind its ILP formulation and introduce the notions of \textit{schedule} and \textit{packing}. Next we provide in Section 3 the definition of \IMSlong\ (\IMS) problem as well as some preliminary observations on the optimum solutions of \Mbot . Section 4 is devoted to the description of our approximation algorithm. This algorithm relies on a polynomial time dynamic programming algorithm that solves  \Mbot\ when the periods are merged into a single macro-period: this is described in Section 5.

\section{Problem description and notations}\label{sec:notation}

We consider a set of identical resources (for instance robots, workers, etc) which cooperate in
order to process different types of jobs.  A $p$-resource is a \textit{configuration} which
makes $p$  resources cooperate on the same job. For example, in robotics, it models the fact that $p$ elementary robots can assemble together in order to transport heavy loads. A
maximum of $P$  resources may cooperate. The set of configurations is thus $\p$ = \{1,\ldots, $P$\}. There are $K$ job types and let $\K = \{1,\ldots, K\}$. The demand for type $k$ of jobs  is denoted by $d_k$ for every type $k \in \K$, and $\dmax$ is the maximum demand.  Given a job of type $k$, there is at least one value $p$ such that
a $p$-resource is able to process a job of type $k$.

All tasks must be executed within a discrete time horizon $\T = \{1,\ldots, T\}$. At the beginning of each period $t \in \T$, the resources may be
reconfigured in order to provide us with numerous configurations which perform jobs for
period $t$. During a given period $t$, a $p$-resource can deal with only a single type $k \in \K$, and its production is given by the capacity $c_{pk}$.
Our purpose is to  minimize the number of  resources involved
into the whole process. If $H_t$ denotes the number of active  resources
during period $t$, then the number of  resources necessary to achieve the whole process is $H = \max\limits_{t \in \T} H_t$. Let \textbf{Multi\_Bot} refer to this optimization problem.

\subsection{ILP formulation for \Mbot} \label{subsec:multibot}

More formally, we provide in Problem~\ref{pb:multibot} the integer linear programming (ILP) formulation of \Mbot.
We denote by $x_{pkt}$ the decision variable representing the number of jobs of type $k$ performed in configuration $p$ at period $t$. 

Constraint \eqref{ILP1} means that we must have a sufficient total capacity to satisfy demand $d_k$. Quantity  $\sum_{t\in\T} \sum_{p\in\p} c_{pk}\cdot x_{pkt}$ represents the maximum number of jobs of type $k$ that can be processed over the horizon, given the $x_{pkt}$. Constraint \eqref{ILP2} means that the number of required  resources in period $t$ can be written as $H_t$ = $\sum_{p,k} p \cdot x_{pkt}$. Constraint \eqref{ILP3} means that $H$, the number of  resources necessary to achieve the whole process, must be greater than or equal to every $H_t$. Constraint \eqref{ILP4} recalls that all decision variables are non negative integers.

Figure~\ref{fig:schedule} illustrates an optimal solution ({\em i.e.} which minimizes $H$) for the following instance of \Mbot . There are two types of jobs ($K = 2$) and three periods ($T=3$). The maximum size of a performed job is $P=5$. The demands are $d_1 = 13$ and $d_2 = 10$. The capacities for jobs of type 1 are $c_{11} = 1$ and $c_{21} = 4$. The demand $d_1 = 13$ is achieved as the schedule contains three $2$-resources (total production 12) and one $1$-resource (production 1) handling jobs of type $k=1$. The capacities for jobs of type 2 are linear in $p$: for any $1\le p\le P$, $c_{p2} = p$. One can check the demand for $k=2$ is also reached. Furthermore, it is not possible to find a solution which achieves $H = 5$.

\begin{problem}[\Mbot]
\begin{align}
 \mbox{\emph{\textbf{Input:}}} ~~ & \K = \set{1,\ldots,K}, \p = \set{1,\ldots,P} \nonumber \\
 &  \T = \set{1,\ldots,T} \nonumber \\
 & \emph{Capacities}~ (c_{pk})_{\substack{p \in \p \\ k \in \K}} \nonumber \\
 & \emph{Demands}~  (d_k)_{k \in \K} \nonumber\\
 \mbox{\emph{\textbf{Objective:}}} ~~   & \emph{minimize}  \, \,   H\label{ILP0}
 \\
        \mbox{\emph{\textbf{subject to:}}} ~~ & \sum_{t\in \T} \sum_{p\in \p} c_{pk} \cdot x_{pkt} \geq d_k & \forall k \in \K \label{ILP1}\\
     & H_{t} = \sum_{k\in\K} \sum_{p\in \p} p \cdot x_{pkt} & \forall t \in \T  \label{ILP2}\\
     & H \geq H_{t} & \forall t \in \T  \label{ILP3}\\ 
     & x_{pkt} , H \in \mathbb{N}   & \forall k \in \K, p \in \p, t\in \T    \label{ILP4}
\end{align}
\label{pb:multibot}
\end{problem}

\subsection{Schedules and packings} \label{subsec:schedules}

A solution of the \Mbot\ problem is thus a vector of size $PKT$: 
$$\mathbf{x}= (x_{pkt})_{\substack{p \in \p\\ k \in \K\\ t \in \T}}$$
In the remainder, we abuse notation when the context is clear: the same vector could be denoted implicitly by $(x_{pkt})_{p,k,t}$ to gain some space. Also, notations $x_{pkt}$ and $x_{p,k,t}$ refer to the same value.

\begin{figure}[t]
\centering
\includegraphics[scale=0.4]{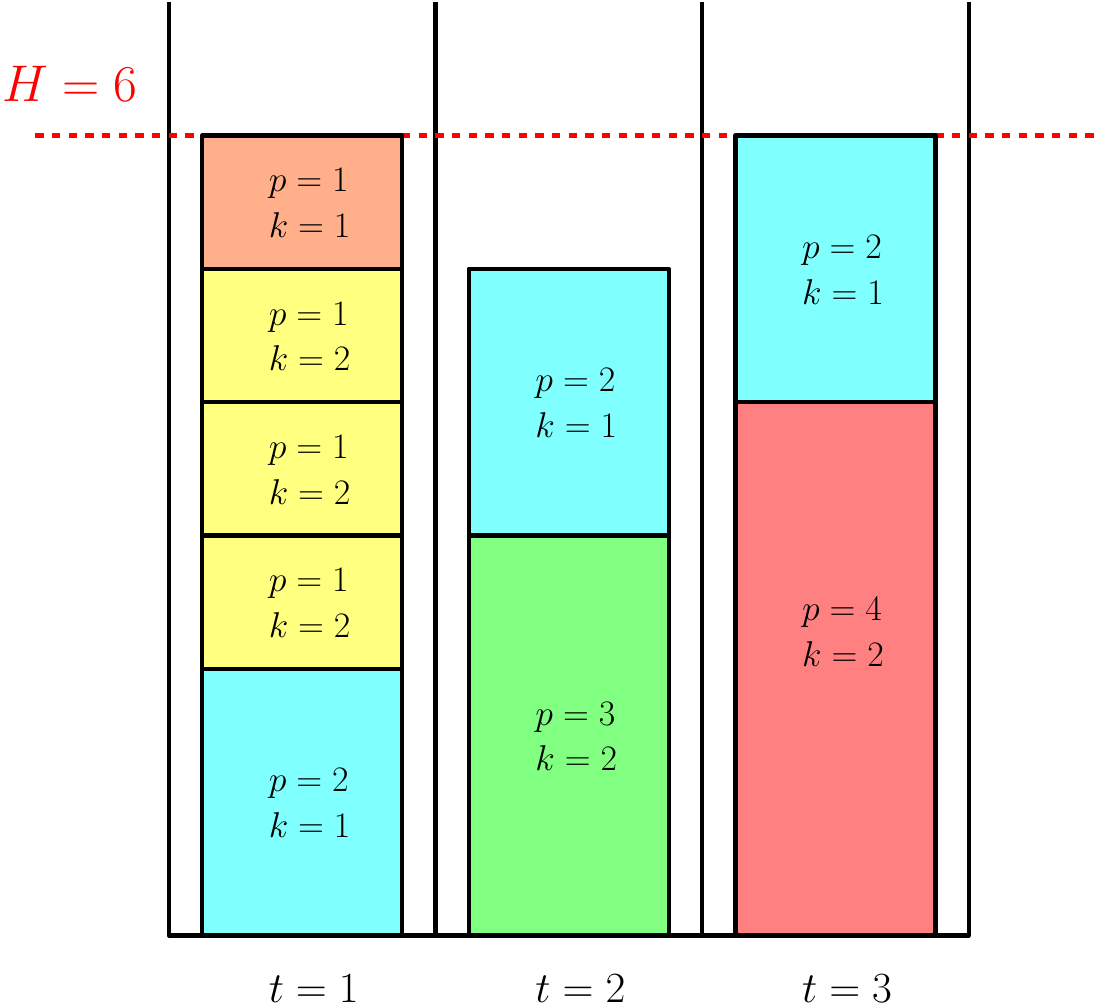}
\caption{An example of schedule $\mathbf{x} = (x_{pkt})_{p,k,t}$ with $H = \max_t H_t = 6$, meaning that at most $6$ resources are used per period.}
\label{fig:schedule}
\end{figure}

The input size of \Mbot\ is $O\left(T + KP(\log \cmax + \log \dmax)\right)$. Concretely, values $K,P,T$ can be seen as combinatorial inputs while demands and capacities are numerical values. Hence, values $c_{pk}$ and $d_k$ might be exponential in the input size, but also $x_{pkt}$, $H_t$ and $H$. However, observe that the size of a solution $\mathbf{x}$ is polynomial in the input size. We call such a solution a \textit{schedule} of the \Mbot\ instance.

Given an instance of \Mbot,
\begin{itemize}
\item let $\HMaxopt$ denote the optimum value for the evaluation function $H$,
\item let $\Xopt = \left(x_{pkt}^*\right)_{p,k,t}$ be a schedule reaching this optimum.
\end{itemize} 

In comparison with a schedule $\mathbf{x}$ which is a vector with $KPT$ values, a \textit{packing} $\bm{\pi} = (x_{pk})_{\substack{p \in \p\\ k \in \K}}$  represents the schedule of jobs of type $k$ by $p$-resources, independently from any time consideration. For example, it can be used to describe the production process during one specific period. This is a vector with $KP$ values. Given some schedule $\mathbf{x}= (x_{pkt})_{p,k,t}$, its \textit{associated packing} is simply given by all values $\sum_{t\in\T} x_{pkt}$, for all $p\in \p, k\in \K$. In particular, the packing associated with the optimal schedule $\Xopt$ is denoted by $\Piopt$ :
$$\Piopt = \left(\sum_{t\in \T} x_{pkt}^*\right)_{\substack{p \in \p\\ k \in \K}}$$

We pursue with the definition of measures for both schedules and packings.

\begin{definition}[Volume]
The \emph{volume} of a packing is the total number of resources involved in this packing. Formally, for $\bm{\pi} = (x_{pk})_{p,k}$,
$$\volit(\bm{\pi}) = \sum_{k\in \K} \sum_{p\in \p} p\cdot x_{pk}$$
For the sake of simplicity, we also use this notion for schedules. The volume of a schedule $\mathbf{x}$ is the volume of its associated packing, {\em i.e.} $\sum_{p,k,t} p\cdot x_{pkt}$.
\label{def:volume}
\end{definition}
\begin{definition}[Maximum]
The \textit{maximum} of a packing $\bm{\pi}$ is the maximum $p \in \mathcal{P}$ such that some $x_{pk}$ is non-zero. Formally, $$\max(\bm{\pi}) = \max \set{p \in \mathcal{P} : x_{pk} > 0 ~\mbox{for some}~ k}$$.
\label{def:max}
\end{definition}

Eventually, we define another measure on packings that we will use in our approximation algorithm: the \textit{scale}.

\begin{definition}[Scale]
Given some integer parameter $\lambda$, let us call the \emph{big configurations} the $p$-resources with $p>\frac{2\lambda}{3}$ and the \emph{medium configurations} the $p$-resources with $\frac{\lambda}{3}<p\le \frac{2\lambda}{3}$. Naturally, the \emph{small configurations} refer to $p \le \frac{\lambda}{3}$. We define the $\lambda$-\emph{scale} as the number of jobs performed with big configurations in packing $\bm{\pi}$ plus half the number of jobs performed with medium configurations in $\bm{\pi}$. It is a half-integer:
\begin{equation}
\sca{\lambda}(\bm{\pi}) = \sum_{k\in \K} \sum_{p > \frac{2\lambda}{3}} x_{pk} + \frac{1}{2}\sum_{k\in \K} \sum_{\frac{\lambda}{3} < p \le \frac{2\lambda}{3}} x_{pk}
\label{eq:scale}
\end{equation}
\end{definition}

Given a solution $\mathbf{x}$ of \Mbot\ using $H$ resources, its associated packing $\bm{\pi}$ has a limited $H$-scale. Indeed, looking at how much medium or big configurations a period can contain, we see that it has at most either one big configuration or two medium configurations. For example, a period cannot contain both a big and a medium configuration, by definition. Hence, $\sca{\lambda}(\bm{\pi}) \le T$.

As an example, for the packing $\bm{\pi}$ associated to the schedule proposed in Figure~\ref{fig:schedule}, we have $\vol(\bm{\pi}) = 17$, $\max(\bm{\pi}) = 4$ and $\sca{5}(\bm{\pi}) = 1 + 4\cdot \frac{1}{2} = 3$.

\section{Preliminaries }

\subsection{Approximation algorithms}

Unfortunately, \Mbot\ is strongly NP-complete for the general case because it can be reduced from \textbf{Bin Packing}~\cite{chaikovskaia23thesis}. Consequently, assuming P$\neq$NP, there is no polynomial-time exact algorithm for \Mbot, even if the numerical values are supposed to be polynomially-bounded by the input size. A very natural question is thus the approximability of this problem.

An $r$-approximation algorithm, $r\ge 1$, for \Mbot\ is a polynomial-time algorithm which outputs a solution $\mathbf{x} = (x_{pkt})_{p,k,t}$ such that:
$$H = \max\limits_{t \in \T} \left(\sum\limits_{k \in \K} \sum\limits_{p \in \p} p \cdot x_{pkt}\right) \le r \cdot \HMaxopt$$

Approximation algorithms offer the guarantee that the number of resources used by the solution returned is at most a linear function of the optimum.

\subsection{Identical-machines scheduling}

We recall the definition and some results related to a well-known problem in operations research: \IMSlong\ \cite{graham69}. This problem can be seen as the optimization of \textbf{Bin Packing} by the capacities, where the number of boxes is fixed. In the scheduling framework, \IMS\ corresponds to $P\vert \vert C_{\max}$. Its objective is to assign a set of $n$ tasks given with their processing times to $m$ identical machines such that the makespan is minimized. To distinguish \IMS\ with \Mbot , we will use a slightly different syntax. Formally, we are given a set of items $\I=\{1,\cdots, n\}$ and a set of boxes $\B = \set{1,\ldots,m}$. An item $i$ has a certain size $s_i$. The objective is to pack all items into the boxes such that the total size packed inside the different boxes is balanced. More precisely, we aim at minimizing the size of the most filled box.

\begin{problem}[\textbf{Identical-machines scheduling} (\IMS)]
\begin{align}
 \mbox{\emph{\textbf{Input:}}} ~~ & \emph{Items}~ \I = \set{1,\ldots,n},  \nonumber  \\
 & \emph{Boxes}~ \B = \set{1,\ldots,m} \nonumber \\
 & \emph{Sizes}~ \set{s_{i}}_{i\in \I} \nonumber \\
 \mbox{\emph{\textbf{Objective:}}} ~~   & \emph{minimize}  \, \,   H
 \\
        \mbox{\emph{\textbf{subject to:}}} ~~ & \sum_{j\in \B} x_{ij} = 1 & \forall i\in \I \label{IMS1}\\
     & H_{j} = \sum_{i\in \B} p_i\cdot x_{ij}  & j \in \B \label{IMS2}\\
     & H \geq H_{j} & j \in \B \label{IMS3}\\ 
     & x_{ij} \in \set{0,1}& i \in \I, j \in \B  \label{IMS4}
\end{align}
\label{pb:ims}
\end{problem}

Observe that the minimization function of \Mbot\ and \IMS\ are similar: in both problems, we aim at minimizing a certain ``volume'' of the jobs/items which have been put into the most filled box/period. Naturally, we will try in the remainder to reduce - in some sense - instances of \Mbot\ into instances of the well-known problem \IMS. There is a natural correspondence between packings and instances of \IMS , since each $p$-resource perfoming a job $k$ in $\bm{\pi}$ can be seen as an item of size $p$. Said differently, for any $p \in \mathcal{P}$, the $\sum_k x_{pk}$ resources present in packing $\bm{\pi}$ can be converted into $\sum_k x_{pk}$ items of size $p$.

Given an \IMS\ instance $\mathcal{J}$, we define:
\begin{itemize}
    \item its \textit{volume} $\vol(\mathcal{J})$ as the total size of its items, {\em i.e.} $\vol(\mathcal{J}) = \sum_{i\in \I} s_i$,
    \item its \textit{maximum} as the maximum item size: $\max(\mathcal{J}) = \max_{i\in \I} s_i$.
\end{itemize}

A heuristic of \IMS\ will be used as a sub-routine for our approximation algorithm dedicated to \Mbot . It is called \textsc{Longest-processing-time-first} (\lpt) ~\cite{graham69}. Its description is relatively simple: first sort the items by decreasing order of their sizes (the largest size comes first), second put the items into the boxes following this order. The filling must satisfy the following rule: we always fill the box with the largest empty space, or said differently the least filled box. Figure~\ref{fig:lpt} shows the packing obtained with \lpt\ for a certain instance. Graham showed that \lpt\ is a $\frac{4}{3}$-approximation algorithm~\cite{graham69}. In addition to offering a small approximation ratio, \lpt\ is a fast algorithm: it runs only in $O(n(\log m + \log n))$.

\begin{figure}
\centering
\includegraphics[scale=0.5]{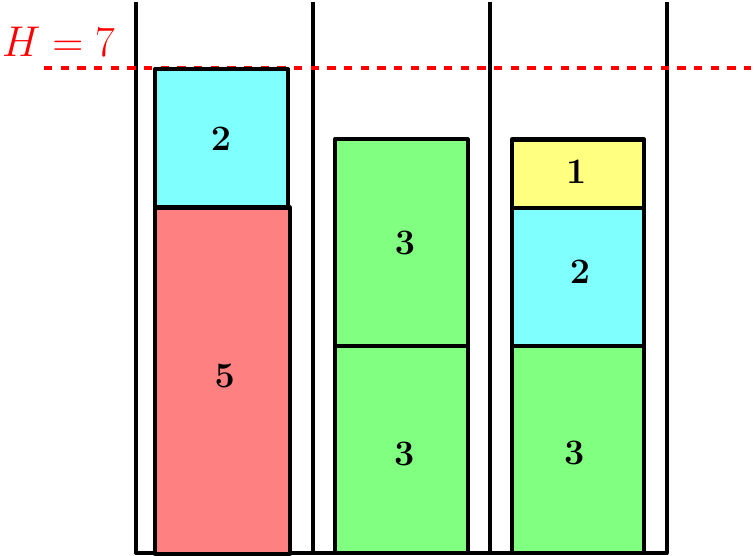}
\caption{Output of \lpt\ with item sizes $(5,3,3,3,2,2,1)$}
\label{fig:lpt}
\end{figure}

In the literature, \IMS\ admits other approximation algorithms, such as \textsc{Multifit}~\cite{coffmangj78} which has a ratio $\frac{13}{11}$, proven by Yue~\cite{yue90}. Hochbaum and Schmoys proposed a PTAS~\cite{hochbaum85} and we know that \IMS\ does not admit a FPTAS since it is strongly NP-complete~\cite{garey1979computers}. In the remainder, despite the existence of smaller approximation ratios, we focus only on the \lpt\ algorithm since it allows us to identify an approximation of \Mbot\ with our framework.

\subsection{Optimal configurations}

We provide a crucial observation for \Mbot\ which will help us designing efficient algorithms in the remainder. This is based on the notion of \textit{optimal configuration} described below.

\begin{definition}[Optimal configuration for $k \in \K$]
For any $k \in \K$, let $p_0(k)$ be the \emph{optimal configuration} for jobs of type $k$, {\em i.e.} the configuration $p \in \p$ which is the most efficient in terms of resources. In brief,
$$p_0(k) = \argmax_{p \in \p} \frac{c_{pk}}{p}$$
\label{def:optim_config}
\end{definition}

For any optimum solution $\Xopt$ of \Mbot , given some type $k$ of job, the number $x_{p_0(k),k,t}^*$ of $p_0(k)$-resources used at each period $t$ is potentially very large (exponential in the input size). However, we know that the number of $p$-resources used, with $p \neq p_0(k)$, is bounded polynomially by the input size.

\begin{lemma}
There is an optimum solution $\Xopt$ for which, for any $k \in \K$, and $p \neq p_0(k)$, $x_{pkt}^* \le p_0(k) \le P$.
\label{le:optim_config}
\end{lemma}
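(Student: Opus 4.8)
The plan is to argue by an extremal (minimal-counterexample) principle. Among all optimum schedules I would single out one, call it $\Xopt$, that additionally minimizes the secondary quantity
$$N(\mathbf{x}) = \sum_{k\in\K}\sum_{p\neq p_0(k)}\sum_{t\in\T} x_{pkt},$$
the total number of jobs performed with a non-optimal configuration. Such a minimizer exists since the set of optimum schedules is non-empty and $N$ takes non-negative integer values, so the minimum of $N$ over this set is attained. The whole argument then reduces to showing that this $\Xopt$ already satisfies the claimed inequality $x_{pkt}^*\le p_0(k)$ for every type $k$ and every $p\neq p_0(k)$; the bound $p_0(k)\le P$ is immediate since $p_0(k)\in\p=\{1,\ldots,P\}$.

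Suppose, for contradiction, that some entry violates the bound, i.e. $x_{pkt}^* > p_0(k)$ for some type $k$, some period $t$ and some $p\neq p_0(k)$. The key move is a local swap inside period $t$: remove $p_0(k)$ copies of the $p$-resource and insert instead $p$ copies of the $p_0(k)$-resource. Formally, I would set $x_{pkt}\leftarrow x_{pkt}^*-p_0(k)$ and $x_{p_0(k),k,t}\leftarrow x_{p_0(k),k,t}^*+p$, leaving every other entry untouched. The removal is legitimate because the assumption $x_{pkt}^*>p_0(k)$ keeps the new value non-negative.

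It remains to check that this modified schedule is still an optimum schedule with strictly smaller $N$. First, the number of resources used in period $t$ is unchanged: the swap frees $p\cdot p_0(k)$ resources and consumes $p\cdot p_0(k)$ resources, so every $H_t$ — and hence $H$ — is preserved, and the schedule stays optimal provided it stays feasible. Feasibility of the demand constraint \eqref{ILP1} for type $k$ follows from the definition of $p_0(k)$ as the $\argmax$ of $c_{pk}/p$: the inequality $\frac{c_{pk}}{p}\le\frac{c_{p_0(k),k}}{p_0(k)}$ rearranges into $p_0(k)\cdot c_{pk}\le p\cdot c_{p_0(k),k}$, which says exactly that the production we remove ($p_0(k)$ jobs at capacity $c_{pk}$) is at most the production we add ($p$ jobs at capacity $c_{p_0(k),k}$); no other type is affected and all other constraints are untouched. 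Finally $N$ drops by exactly $p_0(k)\ge 1$, since we decreased a non-optimal-configuration count by $p_0(k)$ while only increasing the optimal-configuration count. This contradicts the minimality of $N(\Xopt)$ and completes the argument. The only point requiring care is the direction of the efficiency inequality together with the bookkeeping that keeps resource usage per period exactly invariant; everything else is routine.
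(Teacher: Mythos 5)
Your proof is correct, and its engine is exactly the one the paper uses: a volume-preserving swap inside a single period $t$ that trades $p_0(k)$ copies of a $p$-resource for $p$ copies of a $p_0(k)$-resource, with feasibility of constraint \eqref{ILP1} preserved because $p_0(k)\, c_{pk}\le p\, c_{p_0(k),k}$, which is just the definition of $p_0(k)$ as the maximizer of $c_{pk}/p$. Where you genuinely differ is the global packaging. The paper modifies an arbitrary optimal schedule directly: it swaps $q=\lfloor x^*_{pkt}/p_0(k)\rfloor$ blocks at once, so the offending entry drops to its remainder modulo $p_0(k)$ in one step, and it leaves implicit the iteration over all violating triples $(p,k,t)$ together with the fact that each fix only increases entries of the form $x_{p_0(k),k,t}$, which are exempt from the bound, so successive fixes never interfere. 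You instead select an optimal schedule minimizing the potential $N(\mathbf{x})=\sum_{k}\sum_{p\neq p_0(k)}\sum_{t} x_{pkt}$ and derive a contradiction from a single small swap, which strictly decreases $N$ by $p_0(k)\ge 1$. Your extremal-principle wrapper buys simultaneity and termination for free: the minimizer satisfies the bound for all entries at once, with no need to argue that repeated local repairs converge. The only extra obligation it incurs is the (routine) remark that the minimum of $N$ over the nonempty set of optimal schedules is attained, which holds since $N$ takes values in $\mathbb{N}$. Both arguments are sound; yours is arguably the more rigorous rendering of the same exchange idea, since the paper's written proof only exhibits a schedule in which one designated entry respects the bound.
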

\begin{proof}
Consider an arbitrary optimum schedule $\Xopt$ and assume that some $x_{pkt}^*$, with $k \in \mathcal{K}$ and $p \neq p_0(k)$, is larger than $p_0(k)$. Then, we build a schedule $\mathbf{x}'$ with $H' = \max_t \sum_{p,k} p x_{pkt}'$ at most $\HMaxopt$ and which satisfy all demands.  We fix $q = \lfloor \frac{x_{pkt}^*}{p_0(k)} \rfloor \ge 1$. Let $\mathbf{x}'$ be the same schedule than $\Xopt$ but replace, at period $t$, a number $p_0(k)q$ of $p$-resources processing jobs of type $k$ by a number $pq$ of $p_0(k)$-resources processing also jobs of type $k$. Observe that this transformation does not modify the volume taken for period $t$. Moreover, the production is only modified for jobs of type $k$ and, as we used a more efficient configuration in the new schedule, the production of jobs of type $k$ at period $t$ cannot decrease. Formally, $$\sum_{p'=1}^P c_{p'k}x'_{p'k} = \sum_{p'=1}^P c_{p'k}x^*_{p'k} + q\left(pc_{p_0(k),k} - p_0(k)c_{pk}\right) = \sum_{p'=1}^P c_{p'k}x^*_{p'k} + R$$
By Definition~\ref{def:optim_config}, we know that $c_{p_0(k),k} \ge p_0(k)\frac{c_{pk}}{p}$, so $R\ge 0$. In summary, schedule $\mathbf{x}'$ ensures at least the same production than $\Xopt$, uses the same volume per period (same $H^*$). Therefore, it is also an optimum schedule and it satisfies $x_{pkt}' \le p_0(k)$.
\end{proof}

From now on, any optimum solution $\Xopt$ of \Mbot\ will be supposed to be such as the one described by Lemma~\ref{le:optim_config}. 

\section{Structure and analysis of the approximation algorithm}

In this section, we present the shape of our approximation algorithm. This will consist in two steps: first solving a slightly different problem from \Mbot\ with only one period, second use its solution to propose a global schedule for the $T$ periods. We show how the approximation ratio $\frac{4}{3}$ can be obtained for the general \Mbot\ when one-period problems are solved exactly.

\subsection{Presentation of the algorithm}

Our idea to design approximations algorithms for the general \Mbot\ follows. We call this general framework \textsc{Bot-approx}:
\begin{itemize}
    \item \textbf{Step 1.} Compute a polynomial-sized collection $\Pi$  of packings with structural properties (see Theorem~\ref{th:collection} for details),
    \item \textbf{Step 2.} For each $\bm{\pi} \in \Pi$, create an \IMS\ instance $\Ipi$ with $T$ boxes and items which are directly obtained from the packing $\bm{\pi}$ (transformation $\bm{\pi} \rightarrow I(\bm{\pi})$ is described in  Section \ref{subsec:groups}),
    \item \textbf{Step 3.} Find an approached solution for all \IMS\ instances $\Ipi$ with \textsc{lpt-first}. Return the one which corresponds to the best schedule.
\end{itemize}

Step 1 is very fuzzy for now, and we will introduce in Section~\ref{sec:packings} our method for computing this collection of packings. Our objective is to produce a collection $\Pi$ containing at least one packing $\bm{\pi} \in \Pi$ which is a good candidate for achieving a satisfying schedule $\mathbf{x}$ when we put its $p$-resources into the periods. Indeed, at least one packing $\bm{\pi} \in \Pi$ will allow us to return after Steps 2 and 3 a schedule $\mathbf{x}$ with $H \le \frac{4}{3}H^*$. The properties of this collection $\Pi$ are presented in Theorem~\ref{th:collection}. In particular, all packings of $\Pi$ will satisfy the demands $d_k$ for each type of job $k$, \emph{i.e.} $\sum_{p\in \p} c_{pk}x_{pk} \ge d_k$.  
%We prove that at least one packing $\bm{\pi} \in \Pi$ satisfies these properties.
%Finally, Section~\ref{sec:packings} introduces sub-routines to compute such packings, based on dynamic programming.

\begin{theorem}\label{th_producing_packing}
Given some \Mbot\ instance, we can produce in polynomial time $O(KP^3T^3 + KP^7)$ a collection $\Pi$ of at most $3P$ packings such that:
\begin{itemize}
    \item each packing in $\Pi$ satisfy all demands,
    \item it contains at least one packing $\bm{\pi}$ which satisfies $\volit(\bm{\pi}) \le \volit(\Piopt)$,
    \item if $H^* < 3P$, it contains at least one packing $\bm{\pi}'$ which satisfies $\volit(\bm{\pi'}) \le \volit(\Piopt)$, $\max(\bm{\pi'}) \le H^*$, and $\scait{H^*}(\bm{\pi}') \le T$.
\end{itemize}
\label{th:collection}
\end{theorem}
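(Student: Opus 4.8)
The plan is to reduce the construction of $\Pi$ to a family of single-period (macro-period) volume-minimization problems, one per guess of the value $H^*$. Since configurations never exceed $P$, every packing satisfies $\max(\bm{\pi}) \le P$, so the requirement $\max(\bm{\pi}') \le H^*$ only bites when $H^* < P$; moreover, as already observed in Section~\ref{subsec:schedules}, when $H^* \ge 3P$ there are no medium or big configurations at all (they would need $p > \frac{\lambda}{3} \ge P$), which forces $\sca{H^*}(\bm{\pi}) = 0 \le T$ automatically and $\max \le P < H^*$. This dichotomy is exactly why the third bullet is conditioned on $H^* < 3P$. I would therefore let the guess $\lambda$ range over $\{1, \ldots, 3P\}$, building one packing $\bm{\pi}_\lambda$ per value and discarding the guesses that turn out infeasible, which already yields the bound of at most $3P$ packings.

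For a fixed $\lambda$, I would define $\bm{\pi}_\lambda$ to be a \emph{minimum-volume} packing subject to: (i) all demands are met, (ii) only configurations $p \le \lambda$ are used, so that $\max(\bm{\pi}_\lambda) \le \lambda$, and (iii) $\sca{\lambda}(\bm{\pi}_\lambda) \le T$. The first bullet of the theorem then holds for every surviving $\bm{\pi}_\lambda$ by (i). For the remaining two bullets, the crucial point is that $\Piopt$ is itself feasible for this subproblem at the relevant guesses. Because a single $p$-resource already consumes volume $p$ inside a period and each period has volume at most $H^*$, we get $\max(\Piopt) \le H^*$; and the at-most-one-big-or-two-medium-per-period counting of Section~\ref{subsec:schedules} gives $\sca{H^*}(\Piopt) \le T$. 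Hence $\Piopt$ is a feasible point of the subproblem for $\lambda = H^*$ (when $H^* < 3P$) and, since configurations are always at most $P \le 3P$ and $\sca{3P}(\cdot) = 0$, also for $\lambda = 3P$. Solving each subproblem to optimality therefore guarantees $\vol(\bm{\pi}_{3P}) \le \vol(\Piopt)$, which is the middle bullet, and whenever $H^* < 3P$ it guarantees $\vol(\bm{\pi}_{H^*}) \le \vol(\Piopt)$ together with $\max(\bm{\pi}_{H^*}) \le H^*$ and $\sca{H^*}(\bm{\pi}_{H^*}) \le T$ by construction, which is the packing $\bm{\pi}'$.

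It remains to solve, within $O(KP^3T^3 + KP^7)$, the single-period subproblem for fixed $\lambda$: minimize $\sum_{k}\sum_{p\le\lambda} p\,x_{pk}$ subject to $\sum_{p\le\lambda} c_{pk}x_{pk} \ge d_k$ for every $k$ and $\sum_{k}\sum_{p>\frac{2\lambda}{3}} x_{pk} + \tfrac{1}{2}\sum_{k}\sum_{\frac{\lambda}{3}<p\le\frac{2\lambda}{3}} x_{pk} \le T$. The only constraint coupling the types is the global scale budget, so I would run a dynamic program over the types $k=1,\ldots,K$ whose state records the half-integral scale consumed so far (at most $2T+1$ values) and stores the least volume achieving it; the transition for type $k$ asks, for each amount of scale allocated to $k$, for the minimum volume that meets $d_k$ using configurations $\le \lambda$. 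Since the DP explores every split of the total budget $T$ among the types and the per-type minima are exact, its optimum is at most $\sum_k \vol_k(\Piopt) = \vol(\Piopt)$, matching the argument above. This is precisely the macro-period routine announced for Section~\ref{sec:packings}.

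The main obstacle is the per-type transition, because $d_k$ and the multiplicities $x_{pk}$ may be exponential in the input size, so the bulk of a demand cannot be filled by brute force. Here I would invoke Lemma~\ref{le:optim_config}: in an optimal packing every configuration other than the best one is used only a polynomially bounded number of times. Concretely, I would fill the bulk of $d_k$ in closed form with the most volume-efficient admissible configuration and then enumerate only a polynomially bounded ``remainder'' of the other configurations — in particular the $O(T)$ medium and big resources that the scale budget can afford — so that the boundary overshoot is optimized without ever enumerating the large multiplicities. Proving that this ``efficient bulk plus bounded remainder'' description actually attains the true per-type optimum (hence never spends more volume than $\Piopt$ does on type $k$ at the matching scale budget) is the delicate step, and it is where the two complexity terms $KP^7$ and $KP^3T^3$ originate.
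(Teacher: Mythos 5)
Your outer structure is exactly the paper's: guess a value $\lambda$, solve a minimum-volume single-period problem restricted to configurations $p\le\lambda$ with $\lambda$-scale at most $T$, and use the fact that $\Piopt$ is feasible for the guess $\lambda=H^*$ (and for the unconstrained guess) to obtain the volume bounds. That half of your argument is correct, including the counting of at most one big or two medium configurations per period. The gap is in the algorithmic half, and you flag it yourself: the per-type ``efficient bulk plus bounded remainder'' transition is never shown to attain the per-type optimum, and as stated it does not follow from Lemma~\ref{le:optim_config}. That lemma concerns $p_0(k)=\argmax_{p\in\p} c_{pk}/p$ over \emph{all} of $\p$ and rests on an exchange that preserves volume but not scale: if $p_0(k)$ is medium or big relative to $\lambda$, the scale budget forbids using it in bulk, and the bulk must instead be carried by the best \emph{small} configuration, for which you need a restricted variant of the lemma (provable by the same exchange, since swapping small for small leaves the scale untouched, but it must be stated and proved). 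Moreover your subproblems carry no volume cap, so for small $\lambda$ the minimum-volume feasible packing can have volume exponential in the input; this is exactly why the transition is delicate for \emph{every} finite $\lambda$, not just one.

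The paper sidesteps this by building the volume cap $H\le\lambda T$ into the subproblem $\MbotOnel{\lambda}{T}$ itself. This costs nothing for correctness, since $\vol(\Piopt)\le TH^*=\lambda T$ at the relevant guess, so $\Piopt$ stays feasible there and the subproblem may simply answer NO for other guesses; and it makes a dynamic program indexed by the remaining volume $W\le\lambda T\le 3PT$ and the remaining scale $\tau$ polynomial outright, with no bulk/remainder argument for any finite $\lambda$. The Lemma~\ref{le:optim_config}-based decomposition is then confined to the single guess $\lambda=\infty$ (your $\lambda=3P$), where the scale constraint is vacuous and the lemma applies as stated; that case is handled by an induction over job types, each step reusing the same DP restricted to one type and to volume $O(P^3)$, which is the source of the $KP^7$ term. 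To complete your proof you must either import that volume cap into your subproblem definition, or prove the restricted exchange lemma together with a polynomial bound on the remainder's volume for every $\lambda$.
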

\begin{proof}
Section~\ref{sec:packings} is entirely dedicated to the proof of this result. Its conclusion is given in Section~\ref{subsec:proof}.
\end{proof}

In the remainder of this algorithm, we apply Steps 2 and 3 for any packing in $\Pi$. Eventually, as the size of the collection is at most $3P$, we will obtain a set of at most $3P$ schedules. We will simply keep the one which provides the minimum $H$.
%In Section~\ref{subsec:meta}, via Theorems~\ref{th:approx_1} and~\ref{th:approx_2}, we show how this collection $\Pi$ of packings makes \textsc{Bot-approx} be an approximation algorithm.

Step 2 will be detailed in Section~\ref{subsec:groups}. For some packing $\bm{\pi}\in \Pi$, a natural idea is, for each value $x_{pk}$, $p \in \mathcal{P}$, $k \in \mathcal{K}$, to create $x_{pk}$ items of size $p$ and solve \IMS\ with $T$ boxes. Said differently, we can construct an instance, with exactly $\sum_k x_{pk}$ items of size $p$ for each $p \in \mathcal{P}$, which will be equivalent to the packing $\bm{\pi}$. In this way, a solution of this \IMS\ instance with $T$ boxes correspond to a schedule for the initial \Mbot\ instance. 

Unfortunately, this transformation might not be achieved in polynomial time as values $x_{pk}$, which depend on demands $d_k$, can be exponential in the input size of \Mbot . In other words, we would create an \IMS\ instance of exponential size. Hence, we present a polynomial-time method which allows us to handle this issue and produce a polynomial-sized \IMS\ instance for $\bm{\pi}$, denoted by $\Ipi$. 

Eventually, Step 3 consists in applying \textsc{lpt-first} on each \IMS\ instance $\Ipi$ -  created at Step 2 - with $T$ boxes. The solutions obtained thus correspond to schedules, as the $T$ boxes of \IMS\ represent the periods of \Mbot\ and each of these periods contains a set of performed jobs (represented by the items), characterized by a configuration $p \in \p$ ans some type of job $k \in \K$. Consequently, the output of the whole process is a collection of $\card{\Pi} \le 3P$ schedules $\mathbf{x}$. Naturally, we keep the schedule with the minimum $H$. We will show that it offers a $\frac{4}{3}$-approximation for \Mbot\ (see Theorems~\ref{th:approx_1} and~\ref{th:approx_2}).

Based on the Theorems~\ref{th:collection}, \ref{th:approx_1} and~\ref{th:approx_2} cited above and proved in the remainder of the article, we present the main result of this paper.

\begin{theorem}[Approximation ratio of \textsc{Bot-approx}]
\textsc{Bot-approx} is a $\frac{4}{3}$-approximation algorithm for \emph{\Mbot}.
\label{th:main}
\end{theorem}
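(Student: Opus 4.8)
The plan is to treat Theorem~\ref{th:main} as an assembly result: the substance lives in Theorem~\ref{th:collection} and in the two per-regime analyses (which I would package as the cited Theorems~\ref{th:approx_1} and~\ref{th:approx_2}), while the main statement only has to certify polynomial running time, certify feasibility, and combine the regimes $\HMaxopt\ge 3P$ and $\HMaxopt<3P$. For running time, I would invoke Theorem~\ref{th:collection} to build $\Pi$ with $\card{\Pi}\le 3P$ in time $O(KP^3T^3+KP^7)$, observe that Step~2 turns each $\bm{\pi}$ into a polynomial-size instance $\Ipi$, and that \lpt\ on $T$ boxes runs in time polynomial in the size of $\Ipi$; repeating this $3P$ times stays polynomial. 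For feasibility, I would note that every packing of $\Pi$ satisfies all demands (Theorem~\ref{th:collection}) and that Steps~2--3 only redistribute the $p$-resources of $\bm{\pi}$ among the $T$ periods, so the production of each type $k$ is preserved and the resulting schedule is feasible. Since the algorithm returns the schedule of minimum $H$, it suffices to exhibit, in each regime, one packing of $\Pi$ whose \lpt\ schedule has $H\le\frac{4}{3}\HMaxopt$.

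Two lower bounds drive everything: spreading the optimal packing over the periods gives $\vol(\Piopt)=\sum_{t}H^*_t\le T\,\HMaxopt$, and every configuration has size at most $P$. I would first dispatch the regime $\HMaxopt\ge 3P$ using the packing $\bm{\pi}$ with $\vol(\bm{\pi})\le\vol(\Piopt)$ granted by Theorem~\ref{th:collection}. Here every item of $\Ipi$ has size at most $P\le\frac{\HMaxopt}{3}$, so the classical list-scheduling estimate applied to the item finishing last in a critical box yields $H\le\frac{\vol(\bm{\pi})}{T}+\max(\bm{\pi})\le\HMaxopt+P\le\frac{4}{3}\HMaxopt$. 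This case is essentially free once the volume bound is in hand.

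The real work is the regime $\HMaxopt<3P$, handled through the packing $\bm{\pi}'$ with $\vol(\bm{\pi}')\le\vol(\Piopt)$, $\max(\bm{\pi}')\le\HMaxopt$ and $\sca{\HMaxopt}(\bm{\pi}')\le T$. I would argue by contradiction: assume the \lpt\ makespan satisfies $H>\frac{4}{3}\HMaxopt$ and examine the item $i$ that lands last in a critical box. Because \lpt\ always fills the least-loaded box, the bound $\vol(\bm{\pi}')\le T\,\HMaxopt$ forces $s_i>\frac{\HMaxopt}{3}$, so $i$ is big or medium. A big $i$ is impossible: since $\sca{\HMaxopt}(\bm{\pi}')\le T$ caps the number of big configurations at $T$, \lpt\ assigns them to distinct boxes, so a big critical item sits alone and contributes load at most $\HMaxopt<\frac{4}{3}\HMaxopt$. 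A medium $i$ is also impossible: the inequality $\#\mathrm{big}+\tfrac12\#\mathrm{medium}\le T$ means there are at most two medium configurations per non-big box, and two mediums total at most $\frac{4}{3}\HMaxopt$; tracking how \lpt\ pours mediums into the least-loaded boxes shows that no box of big/medium items can be driven above $\frac{4}{3}\HMaxopt$ before a medium would be forced onto a box already holding a big item, which the scale budget forbids. Either way we contradict $H>\frac{4}{3}\HMaxopt$.

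The main obstacle is exactly this last step: converting the static counting guarantee $\sca{\HMaxopt}(\bm{\pi}')\le T$ into a dynamic guarantee about the greedy behaviour of \lpt. One must rule out that \lpt\ ever stacks two big items, a big plus a medium, or three mediums in a single box, and the delicate points are the boundary cases $p=\frac{\HMaxopt}{3}$ and $p=\frac{2\HMaxopt}{3}$ together with the $\frac{T-1}{T}$ slack in the volume estimate. A secondary technical point I would want to verify is that the compression performed in Step~2, needed because the multiplicities $x_{pk}$ may be exponential in the input size, is makespan-equivalent to the naive expansion into $x_{pk}$ items of size $p$, so that the analysis carried out on the expanded instance legitimately bounds the schedule actually produced on $\Ipi$.
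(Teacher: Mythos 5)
Your proposal follows essentially the same route as the paper: the proof of Theorem~\ref{th:main} is exactly the case split on $\HMaxopt\ge 3P$ versus $\HMaxopt<3P$, invoking Theorem~\ref{th:collection} for the existence of the right packing in $\Pi$ and Theorems~\ref{th:approx_1} and~\ref{th:approx_2} for the per-regime \lpt\ bounds, with the minimum over the at most $3P$ schedules closing the argument. Your sketches of the two regime analyses also match the paper's; the only slip is the claim that every item of $\Ipi$ has size at most $P$ in the first regime, which fails for the compressed block-items of Definition~\ref{def:transfo_poly}, but Lemma~\ref{le:poly_input} still bounds those by $\HMaxopt/3$, which is all your argument needs.
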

\begin{proof}
Consider some instance of \Mbot . We distinguish two cases, depending on the value of $\HMaxopt$.

If $\HMaxopt \ge 3P$, then we know that the collection $\Pi$ computed at Step 1 contains a packing $\bm{\pi}$ with a smaller volume than $\Piopt$ (Theorem~\ref{th:collection}). According to Theorem~\ref{th:approx_1}, the schedule $\mathbf{x}$ produced with \lpt\ by considering this initial packing $\bm{\pi}$ offers the guarantee that $H \le \frac{4}{3} \HMaxopt$.

If $\HMaxopt < 3P$, then, again from Theorem~\ref{th:collection}, collection $\Pi$ contains a packing $\bm{\pi}'$ with a smaller volume than $\Piopt$ such that $\max(\bm{\pi}') \le H^*$ and $\sca{H^*}(\bm{\pi}') \le T$. By Theorem~\ref{th:approx_2}, the schedule $\mathbf{x'}$ produced with \lpt\ by considering this initial packing $\bm{\pi}'$ gives also $H' \le \frac{4}{3} \HMaxopt$.

As \textsc{Bot-approx} returns the schedule minimizing $H$ among all initial packings $\bm{\pi} \in \Pi$, we are sure to obtain a final solution which uses at most $\frac{4}{3} \HMaxopt$ resources.
\end{proof}

From now on, in the next sections, our objective is to prove Theorem~\ref{th:collection} (Section~\ref{sec:packings}), but also Theorems~\ref{th:approx_1} and~\ref{th:approx_2} (Section~\ref{subsec:meta}) which are the keystones for the proof of Theorem~\ref{th:main}.

\subsection{Transformation of a packing into polynomial \IMS} \label{subsec:groups}

We assume now that we are working with some given packing $\bm{\pi}$, which belongs to the collection computed with Theorem~\ref{th:collection} and is a good candidate to obtain an approximate solution for the \Mbot\ instance. The objective is to ``schedule'' this packing into $T$ periods. Packing $\bm{\pi}$ satisfies all demands $d_k$ of the \Mbot\ instance. In order to assign efficiently each performed job of $\bm{\pi}$ into the periods, we model it as an \IMS\ instance and then use \lpt\ to ensure the approximation factor. We focus in this subsection on the transformation from packing $\bm{\pi}$ to an \IMS\ instance $I(\bm{\pi})$.

A natural way to achieve such equivalent transformation is simply, for each pair $k \in \mathcal{K}, p \in \mathcal{P}$, to create $x_{pk}$ items of size $p$. In this way, we obtain an \IMS\ instance with a volume ({\em i.e.} total size of the items) equal to $\vol(\bm{\pi})$. Each item of $I(\bm{\pi})$ thus represents a performed job and its size gives us the number $p$ of resources it involves. Furthermore, we keep in memory, for each item (equivalently for each $p$-resource of packing $\bm{\pi}$), which type of jobs this $p$-resource performs, even if it has no impact on the instance $I(\bm{\pi})$. Hence, assigning these items to a period $t \in \mathcal{T}$ produces a solution of \IMS\ which completely corresponds to a schedule, since it is equivalent to assigning $p$-resources performing jobs of type $k$ to period $t$, {\em i.e.} proposing some vector $\left(x_{pkt}\right)_{p,k,t}$.

Unfortunately, this simple transformation $\bm{\pi} \rightarrow I(\bm{\pi})$ can produce an exponential-sized instance.
As mentioned in the previous subsection, the values $x_{pk}$ might be exponential in the input size of \Mbot, as they depend on capacities and demands which are numerical values. To avoid  an \IMS\ instance of exponential size, our idea consists in forming ``large'' items which will represent a set of $p$-resources, instead of a single one. In fact, we will distinguish two cases. When $\vol(\bm{\pi})$ is upper-bounded by some polynomial function of $P$ and $T$ given below, we use the natural process of transforming each value $x_{pk}$ into $x_{pk}$ items of size $p$. Otherwise, each item will correspond to a set of $p$-resources and the polynomial size of the constructed \IMS\ instance will be guaranteed. The formal definition follows.

\begin{definition}\label{def_IMS}
We define $I$ as a polynomial-time algorithm which given some packing $\bm{\pi} = (x_{pk})_{p,k}$, produces an \IMS\ instance with the following rules:
\begin{itemize}
    \item If $\volit(\bm{\pi}) \le 3PT$, for each pair $p \in \mathcal{P}, k\in \mathcal{K}$, add $x_{pk}$ items of size $p$ into instance $I(\bm{\pi})$.
    \item Otherwise, if $\volit(\bm{\pi}) > 3PT$, items will represent a set of at most $\frac{\volit(\bm{\pi})}{3T}$ performed jobs with the same configuration $p$ and performing the same type $k$ of jobs. Analytically, for each $p$, let $\alpha_{p} = \lfloor \frac{\volit(\bm{\pi})}{3pT} \rfloor$. For each pair $p \in \mathcal{P}, k\in \mathcal{K}$, add $\lfloor \frac{x_{pk}}{\alpha_p} \rfloor$ items of size $p\alpha_p$, and 1 item of size $p(x_{pk} - \lfloor \frac{x_{pk}}{\alpha_p} \rfloor \alpha_{p})$
\end{itemize}
\label{def:transfo_poly}
\end{definition}

The first case, $\vol(\bm{\pi}) \le 3PT$, corresponds to the natural transformation described above, so we do not give more details on it. However, the second one, $\vol(\bm{\pi}) > 3PT$, needs extra explanations. Here, an item represents a ``block'' of several $p$-resources performing jobs of types $k$, in order to ensure that $I(\bm{\pi})$ has a polynomial size in the encoding of the initial \Mbot\ instance. Each item is thus associated with a configuration $p$ and a type of job $k$. Considering some pair $(p,k)$, almost all items associated with it (except one) represent a number $\alpha_p$ of $p$-resources, so their size is $p\alpha_p$. Their number is given by the division of $x_{pk}$ (number of $p$-resources performing jobs of type $k$ in packing $\bm{\pi}$) by the number $\alpha_p$ of $p$-resources represented by each block. But, if $x_{pk}$ is not a multiple of $\alpha_p$, an extra item should be added into $I(\bm{\pi})$ to represent the remaining $p$-resources.

Observe that, in both cases, the total volume of the \IMS\ instance $I(\bm{\pi})$ is equal to the volume of packing $\bm{\pi}$ as we represented all the performed jobs into $I(\bm{\pi})$. 
%Furthermore, in case $\vol(\bm{\pi}) > 3PT$, each ``block item'' has size at most one third of $H^*$, the optimum of \Mbot . 
But the crucial property ensured by transformation $I$ is certainly that the returned \IMS\ instance has a size polynomial in $P,K,T$. 

\begin{lemma}
Let $\bm{\pi}$ be some packing. The \IMS\ instance $I(\bm{\pi})$: 
\begin{enumerate}
    \item contains $O(PKT)$ items,
    \item satisfies $\volit(I(\bm{\pi})) = \volit(\bm{\pi})$
    \item if $\volit(\bm{\pi}) > 3PT$ and $\volit(\bm{\pi}) \le \volit(\Piopt)$, does not contain items of size greater than $\frac{H^*}{3}$. 
\end{enumerate}
\label{le:poly_input}
\end{lemma}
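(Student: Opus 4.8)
The plan is to prove the three claims of Lemma~\ref{le:poly_input} in order, since the first two are essentially bookkeeping while the third is where the real content lies.

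\medskip

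\textbf{Claims 1 and 2.} For the item count, I would simply count the items produced by Definition~\ref{def_IMS} in each of the two cases. In the case $\volit(\bm{\pi}) \le 3PT$, each item has size $p \ge 1$, so the total number of items is at most the total volume $\volit(\bm{\pi}) \le 3PT = O(PT)$, which is certainly $O(PKT)$. In the case $\volit(\bm{\pi}) > 3PT$, the construction creates, for each of the $PK$ pairs $(p,k)$, at most $\lfloor x_{pk}/\alpha_p \rfloor + 1$ items. Here I would bound $\sum_{p,k}\lfloor x_{pk}/\alpha_p\rfloor$ by noting that each block-item has size $p\alpha_p \ge \alpha_p \cdot p$ and carries volume at least $p\alpha_p$; since $\alpha_p = \lfloor \volit(\bm{\pi})/(3pT)\rfloor$, a block of configuration $p$ accounts for roughly $\volit(\bm{\pi})/(3T)$ units of volume, so there can be at most about $3T$ full blocks in total across all $(p,k)$ sharing a fixed $p$ — more carefully, summing over $k$ the full-block volume is at most $\volit(\bm{\pi})$, giving $O(T)$ full blocks per configuration and hence $O(PT)$ full blocks overall, plus one remainder item per pair, i.e. $PK$ extra items. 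This yields $O(PKT)$ items total. Claim~2 is immediate from the construction: in both cases the items exactly partition the $p\cdot x_{pk}$ units of volume of each $(p,k)$ pair, so $\volit(I(\bm{\pi})) = \sum_{p,k} p\cdot x_{pk} = \volit(\bm{\pi})$; I would just verify that the full-block sizes $p\alpha_p$ and the single remainder size $p(x_{pk} - \lfloor x_{pk}/\alpha_p\rfloor \alpha_p)$ sum to $p\,x_{pk}$, which is algebraically forced.

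\medskip

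\textbf{Claim 3.} This is the heart of the lemma, and I expect it to be the main obstacle. We are in the regime $\volit(\bm{\pi}) > 3PT$ and we assume $\volit(\bm{\pi}) \le \volit(\Piopt)$. The largest items are the full blocks, of size $p\alpha_p$ (the remainder items are strictly smaller), so it suffices to bound $p\alpha_p$. From the definition, $p\alpha_p = p\lfloor \volit(\bm{\pi})/(3pT)\rfloor \le p \cdot \volit(\bm{\pi})/(3pT) = \volit(\bm{\pi})/(3T)$. Now I would relate $\volit(\bm{\pi})/(3T)$ to $H^*/3$ via the volume hypothesis: since $\volit(\bm{\pi}) \le \volit(\Piopt)$, and since the optimal schedule uses at most $H^*$ resources in each of the $T$ periods, its associated packing satisfies $\volit(\Piopt) = \sum_t H_t^* \le T H^*$. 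Substituting gives $p\alpha_p \le \volit(\bm{\pi})/(3T) \le \volit(\Piopt)/(3T) \le T H^*/(3T) = H^*/3$, as required. I would also confirm the remainder items cause no trouble, as their size $p(x_{pk} - \lfloor x_{pk}/\alpha_p\rfloor\alpha_p) < p\alpha_p \le H^*/3$.

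\medskip

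The only subtle point I would be careful about is whether $\alpha_p \ge 1$ holds throughout the second case, so that the floor division and the block sizes are well-defined and positive; this follows because $\volit(\bm{\pi}) > 3PT$ forces $\volit(\bm{\pi})/(3pT) > P/p \ge 1$ for every $p \in \p$. The key inequality tying everything together is the bound $\volit(\Piopt) \le T H^*$, which encodes that the optimal schedule spreads its volume across $T$ periods each of size at most $H^*$; I expect this to be the crucial step and the one most worth stating explicitly in the write-up.
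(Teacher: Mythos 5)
Your proposal is correct and follows essentially the same route as the paper's proof: bounding the item count by the volume in the small-volume case and by $x_{pk}/\alpha_p + 1 = O(T)$ per pair in the large-volume case (both arguments ultimately resting on $\alpha_p = \lfloor \volit(\bm{\pi})/(3pT)\rfloor \ge \volit(\bm{\pi})/(6pT)$), noting that block sizes telescope to give $\volit(I(\bm{\pi})) = \volit(\bm{\pi})$, and chaining $p\alpha_p \le \volit(\bm{\pi})/(3T) \le \volit(\Piopt)/(3T) \le H^*/3$ for the third claim. Your explicit check that $\alpha_p \ge 1$ and your remark on the remainder items are welcome additions but do not change the argument.
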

\begin{proof}
1. If $\vol(\bm{\pi}) \le 3PT$, then the number of items is exactly $\sum_{p,k} x_{pk}$ which is smaller than $\sum_{p,k} p\cdot x_{pk} = \vol(\bm{\pi}) \le 3PT$. Else, we know that for each pair $(p,k)$, there are at most $\frac{x_{pk}}{\alpha_p} +1$ items. Let $\mu_p = \frac{\vol(\bm{\pi})}{3pT}$. As $\mu_p > 1$, $\alpha_p = \lfloor \mu_p \rfloor \ge \frac{\mu_p}{2}$. Hence, $$\frac{x_{pk}}{\alpha_p} = \frac{x_{pk}}{\lfloor \mu_p \rfloor}\le \frac{2x_{pk}}{\mu_p} = 6T\frac{p\cdot x_{pk}}{\vol(\bm{\pi})} \le 6T.$$
Consequently, the total number of items in $I(\bm{\pi})$ is at most $PK(6T+1)$.

2. The volume of an \IMS\ instance is the sum of the sizes over all its items. If $\vol(\bm{\pi}) \le 3PT$, then $\vol(I(\bm{\pi})) = \sum_{p,k} p\cdot x_{pk} = \vol(\bm{\pi})$. Otherwise, $$\vol(I(\bm{\pi})) = \sum_{p,k} \left(p\alpha_p\lfloor \frac{x_{pk}}{\alpha_p} \rfloor + p(x_{pk} - \lfloor \frac{x_{pk}}{\alpha_p} \rfloor \alpha_{p}) \right) = \sum_{p,k} p\cdot x_{pk} = \vol(\bm{\pi}).$$

3. If $\vol(\bm{\pi}) > 3PT$, then $p\alpha_p = p \lfloor \frac{\vol(\bm{\pi})}{3pT} \rfloor \le \frac{\vol(\bm{\pi})}{3T}$: as the volume of $\bm{\pi}$ is at most the one of $\Piopt$ and $H^*\ge \frac{\vol(\Piopt)}{T}$, it gives $p\alpha_p \le \frac{H^*}{3}$. 
\end{proof}

\subsection{Approximation analysis} \label{subsec:meta}

The objective of this subsection is to show that, given the \IMS\ instances $\Ipi$ we created with Steps 1 and 2, \lpt\ algorithm provides a $\frac{4}{3}$-approximation for \Mbot\ on at least one of these instances. Indeed, as each $\Ipi$ contains $T$ boxes, which can be seen as the periods of \Mbot , any of its solution can be directely converted into a schedule $(x_{pkt})_{p,k,t}$.

We begin with the proof of Theorem~\ref{th:approx_1}. Given some packing $\bm{\pi}$ with a smaller volume than $\Piopt$ and which satisfy all demands, \lpt\ achieves a $\frac{4}{3}$-approximation ratio under the condition: $H^* \ge 3P$.

\begin{theorem}
Consider some \emph{\Mbot} instance and a packing $\bm{\pi}$ which satisfies all demands. Moreover, $\volit(\bm{\pi}) \le \volit(\Piopt)$. If $\HMaxoptalign \ge 3P$, then the schedule $\mathbf{x}$ returned by solving $\Ipi$ with \lpt\ verifies $H \le \frac{4}{3} \HMaxoptalign$.
\label{th:approx_1}
\end{theorem}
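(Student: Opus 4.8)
The plan is to bound the makespan of the \lpt\ solution \emph{directly} from the greedy structure of the algorithm, anchored at $H^*$, rather than invoking Graham's $\frac43$-ratio against the \IMS\ optimum of $\Ipi$ (that optimum need not be comparable to $H^*$ for an arbitrary volume-bounded $\bm{\pi}$). Write $V = \vol(\Ipi)$ and let $s_{\max}$ be the largest item size in $\Ipi$. First I would recall the classical \lpt\ inequality: if box $b$ realizes the makespan $H$ and item $j$ (of size $s_j$) is the last item \lpt\ placed in $b$, then at that instant $b$ was the least-loaded box, so its load $L$ satisfied $TL \le V - s_j$ (the total size of items placed before $j$ is at most $V - s_j$, and $b$ is below average). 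Hence $H = L + s_j \le \frac{V}{T} + (1-\frac1T)s_j \le \frac{V}{T} + s_{\max}$. Everything then reduces to bounding the two terms on the right by $H^*$ and $\frac{H^*}{3}$ respectively.

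For the volume term, Lemma~\ref{le:poly_input}(2) gives $V = \vol(\bm{\pi})$, and the hypothesis $\vol(\bm{\pi}) \le \vol(\Piopt)$ together with $\vol(\Piopt) = \sum_{t\in\T} H_t^* \le T\,H^*$ yields $\frac{V}{T} \le H^*$. For the item-size term I would split along the two regimes of Definition~\ref{def:transfo_poly}. If $\vol(\bm{\pi}) \le 3PT$, each item is a single $p$-resource, so $s_{\max} \le P$, and the standing hypothesis $\HMaxopt \ge 3P$ gives exactly $s_{\max} \le \frac{H^*}{3}$. If $\vol(\bm{\pi}) > 3PT$, then since $\vol(\bm{\pi}) \le \vol(\Piopt)$, Lemma~\ref{le:poly_input}(3) directly gives $s_{\max} \le \frac{H^*}{3}$. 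In either case $s_{\max} \le \frac{H^*}{3}$, so combining with the greedy inequality, $H \le H^* + \frac{H^*}{3} = \frac{4}{3}H^*$.

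The final step is to confirm that the object returned is genuinely a schedule of the original \Mbot\ instance achieving this value of $H$. Mapping the $T$ boxes to the $T$ periods, each item is a block of $p$-resources tagged with a job type $k$, so the load of box $t$ equals $\sum_{p,k} p\cdot x_{pkt} = H_t$; thus the \IMS\ makespan equals $\max_{t\in\T} H_t = H$. Moreover $\sum_{t\in\T} x_{pkt} = x_{pk}$, so the per-type production is inherited unchanged from $\bm{\pi}$ and all demands $d_k$ remain satisfied. The only genuinely delicate point to flag is where the hypothesis $\HMaxopt \ge 3P$ enters: it is precisely what forces the ``small'' items of the first regime to be at most $\frac{H^*}{3}$, matching the bound that Lemma~\ref{le:poly_input}(3) supplies automatically in the second regime. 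The greedy inequality itself is routine, so I expect no real obstacle beyond assembling these estimates cleanly and isolating the two regimes correctly.
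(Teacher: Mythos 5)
Your proof is correct and rests on exactly the same two ingredients as the paper's: the least-loaded box has load at most $\vol(\bm{\pi})/T \le \vol(\Piopt)/T \le H^*$ at every step, and every item of $\Ipi$ has size at most $\frac{H^*}{3}$ (via $P \le \frac{H^*}{3}$ in the small-volume regime and Lemma~\ref{le:poly_input}(3) in the large-volume regime). The only difference is packaging --- you apply the one-shot Graham inequality to the critical item in the makespan-realizing box, whereas the paper runs an induction over all items packed --- so this is essentially the same approach.
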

\begin{proof}
Let $N$ be the number of items in $I(\bm{\pi})$. We know that $N = O(PKT)$ from Lemma~\ref{le:poly_input}. We proceed by induction on the number $z$ of items packed by \lpt . We prove that, for any $0\le z\le N$, the number of resources present in each period (or box) after packing the $z$ most large items is at most $\frac{4}{3}H^*$.

The base case is trivial: when $z = 0$, no item was packed, so the current $H$ is zero. Now, assume we already packed the $z$ first items and we want to pack item $z+1$. According to Lemma~\ref{le:poly_input}, $\vol(I(\bm{\pi})) = \vol(\bm{\pi}) \le \vol(\Piopt)$, therefore there is necessarily a period which contains less than $H^*$ resources, otherwise the total volume would overpass $TH^*$, a contradiction. So, the least filled period contains at most $H^*$ resources. Refering to the transformation $I$, see Definition~\ref{def:transfo_poly}, if $\vol(\bm{\pi}) \le 3PT$, each item represents a $p$-resource, so we pack item $z+1$ of size at most $P \le \frac{H^*}{3}$ into it. The volume of this period after adding item $z+1$ is at most $\frac{4}{3}H^*$. If $\vol(\bm{\pi}) > 3PT$, we also pack some item of size at most $p\alpha_p \le \frac{H^*}{3}$ according to Lemma~\ref{le:poly_input}. Together with the induction hypothesis, this observation shows us that all periods use at most $\frac{4}{3}H^*$ resources after packing $z+1$ items: $H\le \frac{4}{3}H^*$.
\end{proof}

Now we fix the other side of the tradeoff, {\em i.e.} when $H^* < 3P$.
\begin{theorem}
Consider some \Mbot\ instance with $\lambda = \HMaxoptalign$ and a packing $\bm{\pi}$ which satisfy all demands. Moreover, $\volit(\bm{\pi}) \le \volit(\Piopt)$, $\max(\bm{\pi}) \le \lambda$ and $\scait{\lambda}(\bm{\pi}) \le T$.
%\begin{equation} \sum_{k\in \K} \sum_{p > \frac{2\lambda}{3}} x_{pk} + 2\sum_{k\in \K} \sum_{\frac{\lambda}{3} < p \le \frac{2\lambda}{3}} x_{pk} \le T.
%\label{eq:small_items}
%\end{equation}
If $H^* < 3P$, the schedule $\mathbf{x}$ returned by solving $\Ipi$ with \lpt\ verifies $H \le \frac{4}{3} \HMaxoptalign$.
\label{th:approx_2}
\end{theorem}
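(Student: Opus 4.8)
I need to prove that when $H^* < 3P$, applying \lpt\ to the instance $\Ipi$ yields a schedule with $H \le \frac{4}{3}H^*$, given a packing $\bm{\pi}$ with volume at most $\vol(\Piopt)$, maximum at most $\lambda = H^*$, and $\lambda$-scale at most $T$. The difficulty relative to Theorem~\ref{th:approx_1} is that here items can be large: the bound $\max(\bm{\pi}) \le \lambda$ only controls item sizes by $H^*$, not by $\frac{H^*}{3}$, so the naive induction argument (every item fits into the least-filled box with slack $\frac{H^*}{3}$) breaks down for big and medium configurations. The plan is therefore to analyze \lpt\ via the item that is responsible for the final makespan, and to use the scale constraint to bound how many large items can conflict in a single box.

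**Main steps.** First I would classify the items of $\Ipi$ according to Definition~\ref{def:transfo_poly}. Since $\max(\bm{\pi}) \le \lambda = H^*$, every item has size at most $H^*$; more precisely, I separate items into \emph{small} (size $\le \frac{\lambda}{3}$), \emph{medium} ($\frac{\lambda}{3} < \text{size} \le \frac{2\lambda}{3}$) and \emph{big} ($> \frac{2\lambda}{3}$), matching the scale definition. (I should first check that when $\vol(\bm{\pi}) > 3PT$ we are in the regime of Lemma~\ref{le:poly_input}.3, where all items have size $\le \frac{H^*}{3}$ and the argument of Theorem~\ref{th:approx_1} applies verbatim; so the genuinely new case is $\vol(\bm{\pi}) \le 3PT$, where each item is a single $p$-resource.) Next I would take the box $j^*$ achieving the final makespan $H$ and look at the \emph{last} item $\ell$ that \lpt\ placed into $j^*$. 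As in Graham's analysis, when $\ell$ was placed, $j^*$ was the least-filled box, so its load at that moment was at most $\frac{\vol(\Ipi)}{T} = \frac{\vol(\bm{\pi})}{T} \le \frac{\vol(\Piopt)}{T} \le H^*$ by Lemma~\ref{le:poly_input}.2 and the definition of $H^*$. Hence $H \le H^* + s_\ell$, and if $s_\ell \le \frac{H^*}{3}$ (in particular if $\ell$ is small) we are immediately done.

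**Closing the gap with the scale.** The remaining case is that the makespan-determining item $\ell$ is medium or big. Here I would argue that $j^*$ cannot already contain much load of medium/big type when $\ell$ arrives. Because \lpt\ processes items in decreasing size order, at the time $\ell$ is placed every medium or big item has already been processed and distributed across boxes, and the scale bound $\scait{\lambda}(\bm{\pi}) \le T$ says the total ``weight'' of such items — counting each big as $1$ and each medium as $\frac12$ — is at most $T$. The key combinatorial step is to show that \lpt, placing large items least-filled-first, never puts two big items or a big-plus-medium or three-mediums into a single box before the small items start arriving; consequently each box receives at most $H^*$ worth of large-item load, and the subsequent small items (each $\le \frac{H^*}{3}$) cannot push any box past $\frac{4}{3}H^*$. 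I would formalize this by combining the scale inequality with the fact that $\max(\bm{\pi}) \le \lambda$ (so a big item has size $\le H^*$ and a medium item $\le \frac{2H^*}{3}$): a box holding one big ($\le H^*$) plus only small items stays below $H^* + \frac{H^*}{3}$, and a box holding two mediums ($\le \frac{4H^*}{3}$) is already at the bound and, by \lpt's balancing, receives no further item once it reaches that load.

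**The main obstacle.** The delicate point I expect to fight with is the interleaving of large and small items under \lpt: I must rule out the scenario where a box accumulates a medium item, then later a second medium or a small item arrives and tips it over $\frac{4}{3}H^*$. The scale bound $\scait{\lambda}(\bm{\pi}) \le T$ is exactly the pigeonhole resource that prevents too many medium/big items from piling up, but translating the aggregate weight bound into a per-box guarantee requires carefully tracking \lpt's least-filled-first choices during the medium/big phase and showing the small items that follow have enough empty space. I anticipate this case analysis — big vs.\ two-medium vs.\ medium-plus-small — to be the technical heart of the proof, with the volume bound supplying the ``average load $\le H^*$'' fact and the scale bound supplying the ``no overloaded box of large items'' fact.
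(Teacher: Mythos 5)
Your overall strategy is the paper's: since $H^* < 3P$ forces $\volit(\bm{\pi}) \le \volit(\Piopt) \le TH^* < 3PT$, every item of $\Ipi$ is a single $p$-resource, and one analyses the three phases of \lpt\ (big, then medium, then small items), using the scale bound to control the first two phases and the volume bound (the least-filled box has load at most $\volit(\bm{\pi})/T \le H^*$) to control the last. Your reduction to the last item $\ell$ placed in the critical box, and the small-item case $s_\ell \le \frac{H^*}{3}$, are handled correctly. (Minor remark: the sub-case $\volit(\bm{\pi}) > 3PT$ that you propose to dispatch via Lemma~\ref{le:poly_input} cannot actually occur under the hypothesis $H^* < 3P$.)

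The gap is that the combinatorial heart --- your ``key combinatorial step'' that no box ever receives two bigs, a big plus a medium, or three mediums --- is announced as a goal rather than proved, and the intermediate claim that ``each box receives at most $H^*$ worth of large-item load'' is false as written (two mediums can total up to $\frac{4H^*}{3}$; you implicitly correct this a sentence later, but the argument is inconsistent as stated). The three missing arguments are short but each draws on a different hypothesis, and they must be spelled out. (i) No two bigs: each big contributes $1$ to $\scait{\lambda}(\bm{\pi}) \le T$, so there are at most $T$ big items, and \lpt\ places each into an empty box while one exists, hence at most one per box, of load at most $\max(\bm{\pi}) \le H^*$. (ii)--(iii) No medium onto a big box and no third medium onto a two-medium box: whenever a medium is about to be placed, there must exist an empty box or a box holding exactly one medium --- otherwise every box holds either a big (weight $1$) or at least two mediums (weight $\ge 1$), and counting the medium being placed gives $\scait{\lambda}(\bm{\pi}) \ge T + \frac{1}{2} > T$, a contradiction --- and such a box has load at most $\frac{2\lambda}{3}$, strictly less than any box holding a big or two mediums, so \lpt\ selects it. Hence after the first two phases every box carries at most $H^*$ (one big) or at most $\frac{4H^*}{3}$ (two mediums), and your small-item argument, which only adds items of size at most $\frac{H^*}{3}$ to boxes of current load at most $H^*$, finishes the proof.
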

\begin{proof}
As $H^* < 3P$, $\vol(\bm{\pi}) \le \vol(\Piopt) \le TH^* < 3PT$, hence the transformation $I$ simply consists, for each pair $(p,k)$, in adding $x_{pk}$ items of size $p$. As a consequence, each item represents exactly one $p$-resource of packing $\bm{\pi}$.

\lpt\ packs first the items of sizes $p > \frac{2H^*}{3}$. We look at the packing after adding only these big items.
As $\lambda = H^*$ and $\sca{\lambda}(\bm{\pi}) \le T$, then there are at most $T$ big items. Moreover, all big items ($p > \frac{2H^*}{3}$) do not overpass a size $H^*$, as $\max(\bm{\pi}) \le \lambda$. So, at this moment, at most 1 item is present in each period, their size is at most $H^*$ resources, and the periods which are not filled with one big item are empty.

Second, \lpt\ packs the medium items of sizes $p > \frac{H^*}{3}$ which are not big. The empty boxes are filled and, when none of them is empty anymore, the remaining medium items are packed upon another medium item, as their boxes are less filled than the ones with big items. There cannot be three medium items into one period $t$ and only one medium item into another period $t'$ since the volume of two medium items is at least $\frac{2H^*}{3}$ and is necessarily larger than for exactly one medium item. Consequently, if there is a period with at least three medium items, then all other periods containing medium items are made up of two of them at least. However, having such a period with three medium items contradicts the scale criterion as we would have $\sca{\lambda}(\bm{\pi}) > T$. Consequently, there cannot be more than two medium items per period, and the number of resources present in the periods filled with medium items is at most $\frac{4H^*}{3}$.

\begin{figure}[h]
    \centering
    \includegraphics[scale = 0.7]{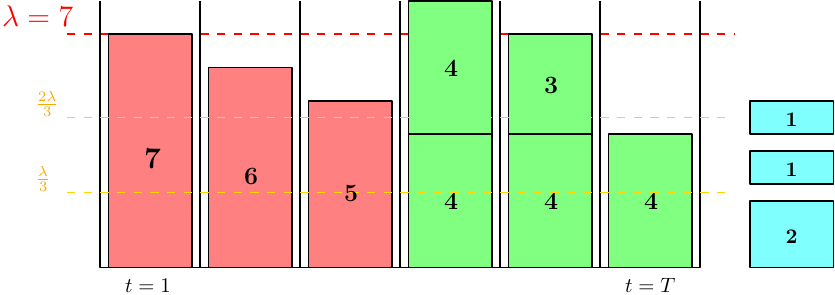}
    \caption{The result of \lpt\ after filling periods with big and medium items first. Items of sizes 1 and 2 ($p \le \frac{\lambda}{3}$) have still to be treated.}
    \label{fig:approx}
\end{figure}

Figure~\ref{fig:approx} illustrates an example of item assignment with \lpt\ after considering big (drawn in red) and medium (in green) items only. We fix $\lambda = 7$, $T = 6$, and items sizes $(7,6,5,4,4,4,4,3,2,1,1)$. The small items, in blue, are not put inside the boxes yet.

The end of the proof consists in applying again the argument used in the proof of Theorem~\ref{th:approx_1}. Now, all periods contain at most $\frac{4H^*}{3}$ resources: the periods with one big item do not exceed $H^*$ and the periods with at most two medium items do not exceed $\frac{4H^*}{3}$. The remaining items have size at most $\frac{H^*}{3}$, and $\vol(\bm{\pi}) \le \vol(\Piopt)$. So, at any step, the least filled period will contain at most $H^*$ resources in order to satisfy the volume condition: we can pack a small item into it without exceeding the bound $\frac{4H^*}{3}$.
\end{proof}

\section{Generation of packings with small volume} \label{sec:packings}

The objective of this section is to generate suitable packings for approximating \Mbot . More precisely, we aim at fixing the process of Step 1 of our algorithm \textsc{Bot-approx}. As stated in Theorem~\ref{th:collection}, we want to produce in polynomial-time a collection $\Pi$ of packing with certain requirements. This section is dedicated to the proof of Theorem~\ref{th:collection}. The conclusion of this proof is given in Section~\ref{subsec:proof}.

We focus on the specific formulation of \Mbot\ where $T=1$. The objective of this problem is to determine the minimum number of resources needed to process all jobs - at least $d_k$ jobs of type $k$ for all $k \in \K$ - in one single period, without reconfiguration. We denote it by \MbotOne. 
Its mathematical formulation can be obtained by simply replacing $T=1$ into the one given in Problem~\ref{pb:multibot}. A solution is thus a packing $(x_{pk})_{p,k}$ as the parameter $t$ does not intervene anymore here. The goal is to minimize the total number of resources, {\em i.e.} $\sum_{k\in \K} \sum_{p\in \p} p\cdot x_{pk}$, while all demands are satisfied, {\em i.e.} $\sum_{p\in \p} c_{pk}\cdot x_{pk} \ge d_k$ for any $k \in \K$. 

Concretely, solving \MbotOne\ provides us with the optimal way to process all jobs during only one session. From Section~\ref{subsec:multibot}, we know that the input size of \MbotOne\ is $O(KP(\log \cmax + \log \dmax))$. We will see in the remainder that an optimal packing for \MbotOne\ can be found in polynomial time. 

\subsection{Optimal packings with limited volume and scale}

To deal with the requirements of Theorem~\ref{th:collection}, we define a problem which is more general than \MbotOne . We call it \MbotOnel{$\lambda$}{$\tau$}: it makes two extra parameters $\lambda, \tau \in \mathbb{N}$ intervene, see Problem~\ref{pb:multibot1}.   

We describe the non-trivial constraints presented in Problem~\ref{pb:multibot1}.
Constraint \eqref{ILP2.1} means that we must have a sufficient capacity to satisfy all demands $d_k$. 
Constraint \eqref{ILP2.2} means that the volume of the packing must be at most $\lambda T$. Constraint \eqref{ILP2.3} means that a performed job must contain no more than $\lambda$  resources. 
Finally, constraint \eqref{ILP2.4} implies that the $\lambda$-scale of the resulting packing must be less than $\tau$. Without constraint~\eqref{ILP2.2}, the problem would always admit a solution, but its volume $H$ could be as large as possible. Here, we are only interested in solutions with a volume bounded by some polynomial function.

Observe that if $\lambda \rightarrow +\infty$, constraints~\eqref{ILP2.2}, \eqref{ILP2.3} and~\eqref{ILP2.4} disappear and the problem ``tends to'' \MbotOne, which always admit a solution. In the remainder, we abuse notation and denote by $\lambda = \infty$  this case. Our idea consists in solving \MbotOnel{$\lambda$}{$\tau$} for $\tau = T$ and all values $\lambda \in \set{1,2,3,\ldots,3P-1,\infty}$: if we find a solution, it will be put into collection $\Pi$. 

\begin{problem}[\MbotOnel{$\lambda$}{$\tau$}]
\begin{align}
 \mbox{\emph{\textbf{Input:}}} ~~ & \K = \set{1,\ldots,K}, \p = \set{1,\ldots,P} \nonumber \\
 & \emph{Capacities}~  (c_{pk})_{\substack{p \in \p \\ k \in \K}} \nonumber \\
 & \emph{Demands}~   (d_k)_{k \in \K} \nonumber\\
  &  \mathbf{\lambda}, \tau \in \mathbb{N}, 1\le \lambda\le 3P \nonumber \\
 \mbox{\emph{\textbf{Objective:}}} ~~   & \emph{find a packing}~ \bm{\pi}=(x_{pk})_{\substack{p \in \p \\ k \in \K}} ~\emph{which} & \nonumber
 \\
 & \emph{minimizes}~ H ~\emph{\textbf{or} answer NO} & \nonumber \\
        \mbox{\emph{\textbf{subject to:}}} ~~ & \sum_{p\in\p} c_{pk} \cdot x_{pk} \geq d_k & \forall k \in \K \label{ILP2.1}\\
     & \sum_{k\in\K} \sum_{p\in \p} p \cdot x_{pk} = H \le \lambda \tau &  \label{ILP2.2}\\
     & x_{pk} = 0   & \forall k \in \K, p > \lambda  \label{ILP2.3}\\ 
     & \sum_{k\in \K} \sum_{p > \frac{2\lambda}{3}} x_{pk} + \frac{1}{2}\sum_{k\in \K} \sum_{\frac{\lambda}{3} < p \le \frac{2\lambda}{3}} x_{pk} \leq \tau &  \label{ILP2.4}\\ 
     & x_{pk}, H \in \mathbb{N} \mbox{, }  & \forall k \in \K, p \in \p    \label{ILP2.5}
\end{align}
\label{pb:multibot1}
\end{problem}

Concretely, let $\bm{\pi}(\lambda)$ denote an optimum packing for \MbotOnel{$\lambda$}{$T$} if it exists. We define: 
\begin{equation}
    \Pi = \set{\bm{\pi}(\lambda) : 1\le \lambda\le 3P-1 ~\mbox{or}~ \lambda = \infty}
\label{eq:collection}
\end{equation}

Collection $\Pi$ will contain at least 1 packing because $\bm{\pi}(\infty)$ necessarily exists, and at most $3P$ packings. We prove now that the optimum solutions for all these problems produce the expected collection of packings. 

\begin{theorem}
Packings $\bm{\pi}(\lambda)$ satisfy the following properties:
\begin{enumerate}
    \item for any $\lambda$, $\bm{\pi}(\lambda)$, if it exists, reaches all demands $d_k$,
    \item $\volit(\bm{\pi}(\infty)) \le \volit(\Piopt)$,
    \item if $H^* < 3P$, then $\bm{\pi}(H^*)$ exists and we have: $\volit(\bm{\pi}(H^*)) \le \volit(\Piopt)$, $\max(\bm{\pi}(H^*)) \le H^*$, $\scait{H^*}(\bm{\pi}(H^*)) \le~T$.
\end{enumerate}
\label{th:opt_packings}
\end{theorem}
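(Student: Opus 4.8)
The plan is to compare, in every case, the optimal single-period packing $\bm{\pi}(\lambda)$ with the packing $\Piopt$ induced by an optimal schedule of \Mbot. The key observation is that $\Piopt$ is itself a feasible solution of the relevant instance of \MbotOnel{$\lambda$}{$T$}; since the objective of that problem is to minimize $H = \sum_{k,p} p\cdot x_{pk} = \volit(\bm{\pi})$, optimality of $\bm{\pi}(\lambda)$ will yield the volume inequalities at once, while the bounds on the maximum and on the scale will simply be inherited from constraints \eqref{ILP2.3} and \eqref{ILP2.4}, which every feasible packing must satisfy.

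I would first dispatch the two easy claims. Property~1 is immediate from constraint \eqref{ILP2.1}: by definition any feasible packing of \MbotOnel{$\lambda$}{$T$}, in particular $\bm{\pi}(\lambda)$, satisfies every demand $d_k$. For Property~2, when $\lambda=\infty$ the constraints \eqref{ILP2.2}--\eqref{ILP2.4} disappear and the problem is exactly \MbotOne; it then suffices to check that $\Piopt$ is feasible, i.e. meets the demands. Summing the \Mbot\ demand constraint \eqref{ILP1} over the periods gives $\sum_{p\in\p} c_{pk}\bigl(\sum_{t\in\T} x^*_{pkt}\bigr) = \sum_{t\in\T}\sum_{p\in\p} c_{pk} x^*_{pkt} \ge d_k$, so $\Piopt$ is feasible for \MbotOne\ (which always admits a solution), and optimality of $\bm{\pi}(\infty)$ yields $\volit(\bm{\pi}(\infty)) \le \volit(\Piopt)$.

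The substance lies in Property~3, which reduces entirely to showing that, when $H^* < 3P$, the packing $\Piopt$ is feasible for \MbotOnel{$H^*$}{$T$} (so that $\bm{\pi}(H^*)$ in particular exists). With $\lambda = H^*$ and $\tau = T$ I would check the four constraints in turn. Constraint \eqref{ILP2.1} holds as in Property~2. Constraint \eqref{ILP2.2} follows from $\volit(\Piopt) = \sum_{t\in\T} H^*_t \le T H^* = \lambda\tau$, using $H^*_t \le H^*$ for every $t$. Constraint \eqref{ILP2.3} holds because a single $p$-resource with $p>H^*$ would by itself load a period beyond $H^*$, contradicting $H^*_t \le H^*$; hence $x^*_{pkt}=0$ for all $t$ whenever $p>H^*$, which also gives $\max(\Piopt)\le H^*$. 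The scale constraint \eqref{ILP2.4} is the only delicate point: I would reuse the per-period counting argument given in Section~\ref{subsec:schedules}, namely that a period using at most $H^*$ resources hosts at most one big configuration ($p>\frac{2H^*}{3}$) or at most two medium ones ($\frac{H^*}{3}<p\le\frac{2H^*}{3}$) --- since two big, one big and one medium, or three medium configurations would each exceed $H^*$ --- so every period contributes at most $1$ to the $H^*$-scale, whence $\scait{H^*}(\Piopt)\le T$.

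Once feasibility of $\Piopt$ is established, the conclusions of Property~3 are immediate: $\bm{\pi}(H^*)$ exists, optimality gives $\volit(\bm{\pi}(H^*))\le\volit(\Piopt)$, and the bounds $\max(\bm{\pi}(H^*))\le H^*$ and $\scait{H^*}(\bm{\pi}(H^*))\le T$ hold because $\bm{\pi}(H^*)$ is feasible and therefore obeys constraints \eqref{ILP2.3} and \eqref{ILP2.4}. The hypothesis $H^* < 3P$ is used only to ensure $H^*\le 3P-1$, so that $\lambda = H^*$ lies in the admissible range $1\le\lambda\le 3P$ of Problem~\ref{pb:multibot1} and $\bm{\pi}(H^*)$ indeed belongs to the collection $\Pi$. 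The main (and essentially only) obstacle is the verification of the scale constraint, which is settled by the elementary packing-geometry observation recalled above.
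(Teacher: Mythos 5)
Your proof is correct and follows essentially the same route as the paper: establish that $\Piopt$ is a feasible solution of the relevant instance of \MbotOnel{$\lambda$}{$T$} (checking the demand, volume, maximum, and scale constraints, the last via the observation that a period with at most $H^*$ resources holds at most one big or two medium configurations), then invoke optimality of $\bm{\pi}(\lambda)$ for the volume bound and feasibility for the remaining bounds. The only addition is your explicit remark that $H^*<3P$ places $\lambda=H^*$ in the admissible range, which the paper leaves implicit.
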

\begin{proof}
1. The constraint~\eqref{ILP2.1} implies that any solution achieves all demands, independently from value $\lambda$.

2. Observe that the solutions (not necessarily optimal) of $\MbotOnel{\infty}{T}$ correspond exactly to the set of packings satisfying all demands - constraint~\eqref{ILP2.1}. Indeed, constraints~\eqref{ILP2.3} and~\eqref{ILP2.4} disappear when $\lambda = \infty$. In particular, $\Piopt$ is one of these solutions and hence $\vol(\bm{\pi}(\infty)) \le \vol(\Piopt)$.

3. The reasoning is similar: if $H^* < 3P$, then $\Piopt$ is a solution of\\ $\MbotOnel{H^*}{T}$. Obviously it satisfies all demands and its volume is at most $TH^*$. Furthermore, as each period in schedule $\mathbf{x^*}$ contains at most $H^*$ resources, {\em i.e.} $H_t^* = \sum_{p,k} x_{pkt}^* \le H^*$, then necessarily $x_{pkt} = 0$ when $p > H^*$. So, $x_{pk} = 0$ for $p > H^*$ and any $k \in \mathcal{K}$. Finally, each period of schedule $\mathbf{x}^*$ cannot contain both a big configuration ($p > \frac{2H^*}{3}$) and a medium one ($\frac{H^*}{3} < p \le \frac{2H^*}{3}$). It cannot contain either three medium ones as it overpasses capacity $H^*$. As a conclusion, each period contains at most either a big configuration or two medium ones: the $H^*$-scale of $\Piopt$ is at most $T$. Hence, $\MbotOnel{H^*}{T}$ admits at least one solution $\Piopt$, so $\bm{\pi}(H^*)$ exists. As $\bm{\pi}(H^*)$ is the solution minimizing the volume, we have $\vol(\bm{\pi}(H^*)) \le \vol(\Piopt)$. The two other inequalities are the consequences of the definition of $\MbotOnel{H^*}{T}$.
\end{proof}

The collection $\Pi$ meets the requirements of Theorem~\ref{th:collection}. Now, we prove that all these packings can be built in polynomial time.

\subsection{Dynamic programming for the single-period problems} \label{subsec:dp}

We begin with the generation of packings $\bm{\pi}(\lambda)$ for $1\le \lambda\le 3P-1$. We will fix the case $\lambda = \infty$ in Section~\ref{subsec:opt_inf}.
We fix some positive integer $\lambda$ with $\lambda \le 3P-1$. Our objective is to solve $\MbotOnel{\lambda}{T}$ and, hence, to obtain either some packing $\bm{\pi}(\lambda)$ or a negative answer. We present a dynamic programming (DP) procedure to achieve this task. 

\textbf{Structure of DP memory}. From now on, variable $\tau$ represents a positive half-integer upper-bounded by $T$ and $W$ a positive integer representing the authorized volume, which is at most $\lambda T$.

We construct a four-dimension vector \texttt{Table}, whose elements are $\tablebot{p,k,W,\tau}$ with three integers $0\le p\le \lambda$, $1\le k\le K$, $0\le W\le \lambda T$ and a half-integer $0\le \tau \le T$. Hence, the total size of the vector is $\lambda^2T^2K = O(P^2T^2K)$.
The role of this vector is to help us producing intermediary packings which allows us to obtain potentially the solution of $\MbotOnel{\lambda}{T}$.
We denote by $\bm{\pi}[p,k,W,\tau]$ a packing which:
\begin{itemize}
\item produces only jobs of type $1,2,\ldots,k$: $x_{pk'}= 0 $ when $k' > k$,
\item uses only configurations $1,2,\ldots, p$ to process the jobs of type $k$: $x_{p'k} = 0$ when $p' > p$,
\item satisfies the demands until $d_{k-1}$: $\sum\limits_{p'=1}^{\lambda} c_{p'k'}\cdot x_{p'k'} \ge d_{k'}$ for all $1\le k'< k$,
\item has a volume at most $W$: $\sum\limits_{p' = 1}^{\lambda} \sum\limits_{k' = 1}^{k} p'\cdot x_{p'k'} \le W$,
\item has a $\lambda$-scale at most $\tau$: $\sca{\lambda}(\bm{\pi}) \le \tau$,
\item maximizes the number of jobs of type $k$ processed, {\em i.e.} $\sum\limits_{p'=1}^{\lambda} c_{p'k}\cdot x_{p'k}$.
\end{itemize}

Packing $\bm{\pi}[p,k,W,\tau]$ does not necessarily exist since reaching the demands might be avoided by the volume and scale conditions. Observe that, by definition, $\MbotOnel{\lambda}{T}$ admits a solution if and only if $\bm{\pi}[\lambda,K,\lambda T,T]$ exists and satisfies the demands for jobs of type $K$. Also, for the specific value $p = 0$, the packing $\bm{\pi}[0,k,W,\tau]$ cannot use any configuration for jobs of type $k$, so in fact it does not process jobs of type $k$ at all. We can fix: $\bm{\pi}[0,k,W,\tau] = \bm{\pi}[\lambda,k-1,W,\tau]$. The particular case $p = 0$ and $k = 1$ gives an empty packing $\bm{\pi}[0,1,W,\tau]$: it will be forgotten in the remainder.

The objective of vector \texttt{Table} is to contain either the production of jobs of type $k$ by $\bm{\pi}[p,k,W,\tau]$ if it exists, or $-\infty$. More formally,
\begin{equation}
\tablebot{p,k,W,\tau} = \left\{
\begin{split}
& \sum_{p'=1}^p c_{p'k} x_{p'k} & \mbox{if}~ \bm{\pi}[p,k,W,\tau] = (x_{p'k'})_{p',k'} ~\mbox{exists}\\
& -\infty  & \mbox{otherwise}\\
\end{split}
\right.
\label{eq:table_expected}
\end{equation}

In addition, we also compute two vectors \texttt{Bool} and \texttt{Pack} with the same dimension sizes than \texttt{Table}. Vector \texttt{Bool} simply indicates, with a boolean, whether $\bm{\pi}[p,k,W,\tau]$ satisfies the demand $d_k$ for jobs of type $k$ ($\boolbot{p,k,W,\tau} = \True$) or not ($= \False$). Finally, vector \texttt{Pack} provides us with some information which allows us to retrieve exactly the composition of $\bm{\pi}[p,k,W,\tau]$. More details on vector \texttt{Pack} will follow.

Before stating our recursive formula to achieve the DP algorithm, we define an extra function $S_{p,\lambda}$ for any $1\le p\le \lambda$. Given some half-integer ``budget'' $\tau$ and an integer $j$, it returns the updated $\lambda$-scale budget which remains after adding a number $j$ of $p$-resources to some packing. Formally,
\begin{equation}
S_{p,\lambda}(\tau,j) = \left\{
\begin{split}
& \tau & \mbox{if}~ p \le \frac{\lambda}{3}\\
& \tau - \frac{j}{2}  & \mbox{if}~ \frac{\lambda}{3} < p \le \frac{2\lambda}{3}\\
& \tau - j & \mbox{if}~ p > \frac{2\lambda}{3}\\
\end{split}
\right.
\nonumber
\end{equation}

\textbf{Recursive formula}. We state now a recursive formula for computing all values of \texttt{Table}, \texttt{Bool} and \texttt{Pack}. We begin with the recursive scheme of \texttt{Table} and \texttt{Bool}. 

The base case is $p = k = 1$. Packing $\bm{\pi}[1,1,W,\tau]$ has no demand to satisfy, therefore it necessarily exists. It corresponds to taking the maximum number of $1$-resources which satisfy both the volume and scale conditions:
\begin{equation}
\forall W, \forall \tau, ~~~\tablebot{1,1,W,\tau} = \max_{\substack{0\le j\le W\\ S_{1,\lambda}(\tau,j) \ge 0}} j\cdot c_{11}
\label{eq:tab_base}
\end{equation}
\begin{equation}
\boolbot{1,1,W,\tau} = \True \Leftrightarrow \tablebot{1,1,W,\tau} \ge d_1
\label{eq:bool_base}
\end{equation}

For the general statement, we distinguish two cases. First assume that $p = 0$ and $k > 1$. We have $\bm{\pi}[0,k,W,\tau] = \bm{\pi}[\lambda,k-1,W,\tau]$.
\begin{equation}
\tablebot{0,k,W,\tau}
= \left\{
\begin{split}
& 0 & \mbox{if}~ \tablebot{\lambda,k-1,W,\tau} \ge d_{k-1}\\
& -\infty & \mbox{otherwise}\\
\end{split}
\right.
\label{eq:table_zero}
\end{equation}
\begin{equation}
\boolbot{0,k,W,\tau} = \False
\label{eq:bool_zero}
\end{equation}

Second, assume that $p \ge 1$, {\em i.e.} jobs of type $k$ can be processed by $p'$-resources, with $p' \le p$. We write:

\begin{equation}
\tablebot{p,k,W,\tau} = \hspace{-1.2cm}\max_{\substack{0\le j\le \lfloor \frac{W}{p} \rfloor\\ S_{p,\lambda}(\tau,j) \ge 0\\ \boolbot{\lambda,k-1,W-jp,S_{p,\lambda}(\tau,j)}}}  \tablebot{p-1,k,W-jp,S_{p,\lambda}(\tau,j)} + j c_{pk}
\label{eq:table_rec}
\end{equation}
\begin{equation}
\boolbot{p,k,W,\tau} = \True \Leftrightarrow \tablebot{p,k,W,\tau} \ge d_k
\label{eq:bool_rec}
\end{equation}

Index $j$ represents the number of $p$-resources we might add to our packing. Naturally, this addition should overpass neither the volume constraint ($j \le \frac{W}{p}$), nor the scale one ($S_{p,\lambda}(\tau,j) \ge 0$). Furthermore, $\boolbot{\lambda,k-1,W-jp,S_{p,\lambda}(\tau,j)}$ must be True as it guarantees that demands $d_1,\ldots,d_{k-1}$ can be satisfied before the add-on of $j$ $p$-resources. It may happen that no index $j$ (even $j = 0$) satisfies these three conditions. In this case, we fix $\tablebot{p,k,W,\tau} = -\infty$.

%As at least one of these two cases is satisfied, $\tablebot{p,k,W,\tau}$ takes one of the two values introduced above. Consequently,
%\begin{equation}
%\tablebot{p,k,W,\tau} = \max \set{\texttt{Table}_1[p,k,W,\tau], \texttt{Table}_2[p,k,W,\tau]}.
%\label{eq:recur_table}
%\end{equation}

Observe that $\tablebot{p,k,W,\tau} = -\infty$ if and only if $\boolbot{\lambda,k-1,W,\tau}$ is \False. This makes sense since it means that with the volume $W$ and the scale $\tau$ we are considering, the demands for jobs of type $1,2,\ldots,k-1$ could not be reached. Figure~\ref{fig:dp_calls} provides us with a 2D-view of vector \texttt{Table}, highlighting the area where values are base cases and giving examples of recursive calls, following Equations~\eqref{eq:table_zero} and~\eqref{eq:table_rec}. We prove that $\tablebot{p,k,W,\tau}$ is equal to the expectations expressed in Equation~\eqref{eq:table_expected}.

\begin{figure}[t]
    \centering
    \includegraphics[scale=0.8]{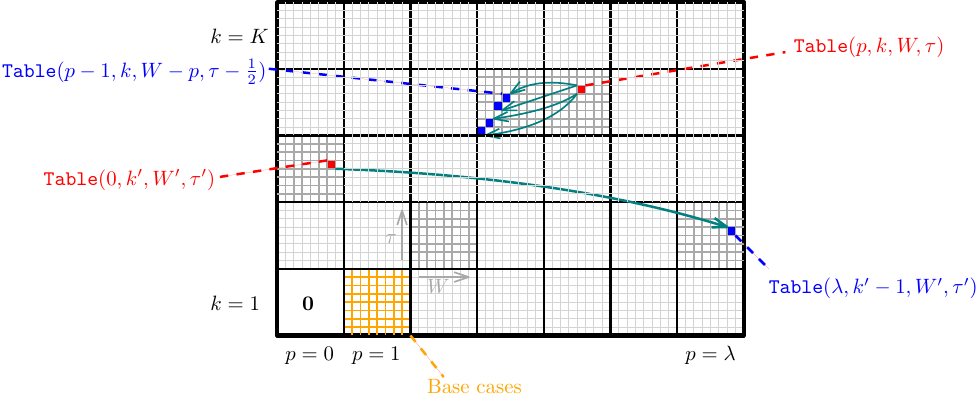}
    \caption{2D-projection of the 4D-vector \texttt{Table} illustrating the different recursive calls}
    \label{fig:dp_calls}
\end{figure}

\begin{lemma}
If $\bm{\pi}[p,k,W,\tau] = (x_{p'k'})_{p',k'}$ exists, then:
\begin{itemize}
\item $\emph{\texttt{Table}}[p,k,W,\tau] = \sum_{p'} c_{p'k} x_{p'k}$. 
\item $\emph{\texttt{Bool}}[p,k,W,\tau] = \emph{True}$ if and only if $\bm{\pi}[p,k,W,\tau]$ satisfies demand $d_k$. 
\end{itemize}
Otherwise, $\emph{\texttt{Table}}[p,k,W,\tau] = -\infty$ and $\emph{\texttt{Bool}}[p,k,W,\tau] = \emph{False}$. 
\label{le:table}
\end{lemma}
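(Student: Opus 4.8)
The plan is to prove the lemma by a single induction that establishes the correctness of \texttt{Table}, from which the correctness of \texttt{Bool} follows for free. For each index $(p,k,W,\tau)$ I would introduce $\mathrm{OPT}[p,k,W,\tau]$, defined as the maximum production $\sum_{p'} c_{p'k}x_{p'k}$ over all packings meeting the six bullet-point constraints that characterize $\bm{\pi}[p,k,W,\tau]$, with the convention $\mathrm{OPT}=-\infty$ when no feasible packing exists. The whole statement reduces to proving $\texttt{Table}[p,k,W,\tau]=\mathrm{OPT}[p,k,W,\tau]$: indeed, by Equations~\eqref{eq:bool_base}, \eqref{eq:bool_zero} and~\eqref{eq:bool_rec}, \texttt{Bool} is set to \True\ exactly when the stored production reaches $d_k$ (and to \False\ in the $p=0$ case, where no job of type $k$ is processed at all), so its correctness is an immediate corollary. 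I would run the induction on the pair $(k,p)$ ordered lexicographically; this ordering is well-founded because every recursive call in Equations~\eqref{eq:table_zero} and~\eqref{eq:table_rec} refers either to a strictly smaller $k$ (the calls $\texttt{Table}[\lambda,k-1,\cdot]$ and $\texttt{Bool}[\lambda,k-1,\cdot]$) or to the same $k$ with a strictly smaller $p$ (the call $\texttt{Table}[p-1,k,\cdot]$).

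For the base case $p=k=1$ there is no demand $d_0$ to meet, so a feasible packing always exists and consists solely of $j=x_{11}$ copies of the $1$-resource processing type $1$; maximizing $j\,c_{11}$ subject to the volume bound $j\le W$ and the scale bound $S_{1,\lambda}(\tau,j)\ge 0$ is exactly Equation~\eqref{eq:tab_base}. For the case $p=0,\,k>1$, the packing may use no configuration on type $k$, hence processes zero jobs of type $k$ and reduces to a packing over types $1,\dots,k-1$; by the induction hypothesis it exists precisely when $d_{k-1}$ is met, i.e.\ when $\texttt{Table}[\lambda,k-1,W,\tau]\ge d_{k-1}$, which is Equation~\eqref{eq:table_zero}, and its type-$k$ production is $0$, matching Equation~\eqref{eq:bool_zero}.

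The heart of the argument is the recursive case $p\ge 1$, which I would handle by a two-sided inequality. For the ``$\ge$'' direction, I take any feasible packing $\bm{\pi}$ counted by $\mathrm{OPT}[p,k,W,\tau]$ and set $j:=x_{pk}$. Removing these $j$ copies of the $p$-resource on type $k$ yields a packing that uses only configurations $\le p-1$ on type $k$, still satisfies $d_1,\dots,d_{k-1}$, has volume at most $W-jp$, and — by the definition of $S_{p,\lambda}$ — has $\lambda$-scale at most $S_{p,\lambda}(\tau,j)$. Restricting it to types $1,\dots,k-1$ therefore certifies, via the induction hypothesis, that $\texttt{Bool}[\lambda,k-1,W-jp,S_{p,\lambda}(\tau,j)]=\True$, so $j$ is admissible in the maximum of Equation~\eqref{eq:table_rec}; and the induction hypothesis on $\texttt{Table}[p-1,k,\cdot]$ shows this admissible $j$ contributes at least the production of $\bm{\pi}$. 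For the ``$\le$'' direction, I take the index $j^*$ attaining the maximum (when the feasible set is non-empty), use the induction hypothesis to exhibit a feasible packing realizing $\texttt{Table}[p-1,k,W-j^*p,S_{p,\lambda}(\tau,j^*)]$, and add $j^*$ copies of the $p$-resource on type $k$: the volume returns to at most $W$, the scale bookkeeping of $S_{p,\lambda}$ returns the $\lambda$-scale to at most $\tau$, the lower demands stay satisfied, and the production equals the DP value exactly. When no index $j$ is admissible, the ``$\ge$'' argument shows no feasible packing can exist, so $\mathrm{OPT}=-\infty$, consistent with the observation $\texttt{Table}[p,k,W,\tau]=-\infty\Leftrightarrow\texttt{Bool}[\lambda,k-1,W,\tau]=\False$.

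I expect the main obstacle to be the scale bookkeeping: one must verify that inserting or deleting $j$ configurations of a fixed size $p$ shifts the $\lambda$-scale by exactly $\tau-S_{p,\lambda}(\tau,j)$ in each of the three regimes (small, medium, big), and that the constraints $S_{p,\lambda}(\tau,j)\ge 0$ and $jp\le W$ together characterize precisely when the peeled-off block is compatible with the remaining budget. A secondary delicate point is justifying why the guard $\texttt{Bool}[\lambda,k-1,\cdot]$ in Equation~\eqref{eq:table_rec} is the correct feasibility test — namely that, under the induction hypothesis, it is equivalent to the sub-state actually admitting a packing meeting $d_1,\dots,d_{k-1}$ — so that the maximum is never taken over a state built on an unreachable configuration of the lower demands. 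Throughout, I would also lean on the monotonicity of $\mathrm{OPT}$ in $W$ and $\tau$ (a packing fitting a smaller budget fits a larger one), which is what makes the restriction-to-lower-types step in the ``$\ge$'' direction legitimate.
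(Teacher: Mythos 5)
Your proposal is correct and takes essentially the same route as the paper's own proof: a lexicographic induction on $(k,p)$ with the same base case $p=k=1$, the same handling of $p=0$ and of non-existence via $\texttt{Bool}[\lambda,k-1,W,\tau]$, and the same decomposition of $\bm{\pi}[p,k,W,\tau]$ into $j$ copies of the $p$-resource plus a smaller packing matched to $\bm{\pi}[p-1,k,W-jp,S_{p,\lambda}(\tau,j)]$. Your explicit two-sided inequality simply formalizes the exchange argument that the paper states more informally when it says this decomposition ``justifies Equation~\eqref{eq:table_rec}.''
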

\begin{proof}
First, we handle the base case. If $p = k = 1$, then $\bm{\pi}[1,1,W,\tau]$ exists and is the singleton packing $(x_{11})$ which processes the maximum number of jobs of type $1$ with $1$-resources while satisfying both the volume and scale constraints. Then, $\tablebot{1,1,W,\tau}$ must be equal to $x_{11} c_{11}$, which perfectly fits with the definition given in Equation~\eqref{eq:tab_base}. Moreover, according to Equation~\eqref{eq:bool_base}, $\boolbot{1,1,W,\tau}$ is \True\ if and only if $\bm{\pi}[1,1,W,\tau]$ reaches demand $d_k = d_1$.

We proceed inductively. Assume that the statement is achieved not only for all pairs $(p',k')$ such that $k' < k$ but also when $k' = k'$ and $p' < p$.

If $\bm{\pi}[p,k,W,\tau]$ does not exist, then it means that demands $d_1,\ldots,d_{k-1}$ cannot be satisfied with volume $W$ and scale $\tau$ constraints. So, $\tablebot{\lambda,k-1,W,\tau} < d_{k-1}$ and $\boolbot{\lambda,k-1,W,\tau}$ is \False\ by induction. Consequently, Equations~\eqref{eq:table_zero} and~\eqref{eq:table_rec} ensure us that $\tablebot{p,k,W,\tau} = -\infty$ and $\boolbot{p,k,W,\tau} = \False$, as expected. In the remainder of this proof, we assume $\bm{\pi}[p,k,W,\tau]$ exists.

If $p = 0$, then $\bm{\pi}[0,k,W,\tau] = \bm{\pi}[\lambda,k-1,W,\tau]$. As this packing does not process any job of type $k$, we fixed $\tablebot{p,k,W,\tau} = 0$, see Equation~\eqref{eq:table_zero}. Hence, as stated in Equation~\eqref{eq:bool_zero}, $\bm{\pi}[0,k,W,\tau]$ obviously does not satisfy demand $d_k>0$.

If $p > 0$, then $\bm{\pi}[p,k,W,\tau]$ is made up of a certain number $j \ge 0$ of $p$-resources which process jobs of type $k$, and all other components which can be seen together as a slightly smaller packing.  The latter packing must satisfy the demands for jobs of type $1,\ldots,k-1$ and also maximize the number of processed jobs of type $k$ with configurations $1,\ldots,p-1$, while fulfilling the volume and scale constraints, even if we know that already a number $j$ of $p$-resources have to be counted. Hence, $\bm{\pi}[p,k,W,\tau]$ can be obtained from $\bm{\pi}[p-1,k,W-jp,S_{p,\lambda}(\tau,j)]$ by adding only $x_{pk} = j$. This justifies Equation~\eqref{eq:table_rec}. Finally, as stated in Equation~\eqref{eq:bool_rec}, $\boolbot{p,k,W,\tau}$ is \True\ if and only if demand $d_k$ is satisfied by $\bm{\pi}[p,k,W,\tau]$, {\em i.e.} $\tablebot{p,k,W,\tau} \ge d_k$.
\end{proof}

Values $\tablebot{p,k,W,\tau}$ provide us with the number of jobs of type $k$ processed by optimum packings $\bm{\pi}[p,k,W,\tau]$. Nevertheless, our initial objective is to obtain the composition of these packings. To achieve it, we create a third table \texttt{Pack} with the same dimensions than \texttt{Table} and \texttt{Bool}. A first natural idea would be to fill each element $\packbot{p,k,W,\tau}$ with the packing $\bm{\pi}[p,k,W,\tau]$. However, as the encoding size of $\bm{\pi}[p,k,W,\tau]$ is at least $PK$, it will have some relatively strong impact on the complexity of our algorithm. Hence, we fill $\packbot{p,k,W,\tau}$ only with the necessary information needed to retrieve the packings. 

If $\tablebot{p,k,W,\tau} = -\infty$ or $p = 0$, then $\packbot{p,k,W,\tau}$ is kept empty. Else, we define $\packbot{p,k,W,\tau}$ as the integer $j$ which is the number of $p$-resources processing jobs of type $k$ in $\bm{\pi}[p,k,W,\tau]$.

\begin{equation}
\packbot{p,k,W,\tau} = \left\{
\begin{split}
& j ~\mbox{if}~ p=k=1 ~\mbox{and}~ \tablebot{1,1,W,\tau} = j c_{11}\\
& j ~\mbox{if}~ \tablebot{p,k,W,\tau} = \tablebot{p-1,k,W-jp,S_{p,\lambda}(\tau,j)} + j c_{pk}\\
& \mbox{empty otherwise}
\end{split}
\right.
\label{eq:pack}
\end{equation}

Observe that the encoding size of each value $\packbot{p,k,W,\tau}$ is now $O(\log (\lambda T))$. Some packing $\bm{\pi}[p,k,W,\tau]$ can now be recovered with the following process, we call \textsc{Recover}:
\begin{itemize}
\item if $\tablebot{p,k,W,\tau} = -\infty$, then $\bm{\pi}[p,k,W,\tau]$ does not exist.
\item if $p = k = 1$ and $\packbot{1,1,W,\tau} = j$, then $\bm{\pi}[1,1,W,\tau]$ is the singleton $(x_{11})$ with $x_{11} = j$.
\item if $p = 0$ and $k > 1$, then $\bm{\pi}[0,k,W,\tau] = \bm{\pi}[\lambda,k-1,W,\tau]$.
\item if $\packbot{p,k,W,\tau} = j$, compute recursively packing $\bm{\pi}[p-1,k,W-jp,S_{p,\lambda}(\tau,j)]$ with \textsc{Recover} and add $x_{pk} = j$ to it. 
\end{itemize}

\begin{algorithm}[h]
\SetKwFor{For}{for}{do}{\nl endfor}
\SetKwFor{Forall}{for all}{do}{\nl endfor}
\SetKwIF{If}{ElseIf}{Else}{if}{then}{else if}{else}{\nl endif}
\DontPrintSemicolon
\SetNlSty{}{}{:}
\SetAlgoNlRelativeSize{0}
\SetNlSkip{1em}
\nl\KwIn{Instance of $\MbotOnel{\lambda}{T}$}
\nl\KwOut{A packing $\bm{\pi}(\lambda)$ or answer NO}
\nl Initialize $\texttt{Table}$, $\texttt{Bool}$ and $\texttt{Pack}$ as empty vectors;\;
\nl \For{all $1\le W\le \lambda T$ and $1\le \tau \le T$}{
	\nl fix $\tablebot{1,1,W,\tau}$, $\boolbot{1,1,W,\tau}$, $\packbot{1,1,W,\tau}$ with respectively Equations~\eqref{eq:tab_base},~\eqref{eq:bool_base}, and~\eqref{eq:pack};\;
}
\nl $p = 1$;\;
\nl \For{all $1\le k\le K$}{
	\nl \While{$p \le \lambda$}{
		\nl \For{all $1\le W\le \lambda T$ and $1\le \tau \le T$}{
			\nl fix $\tablebot{p,k,W,\tau}$, $\boolbot{p,k,W,\tau}$, $\packbot{p,k,W,\tau}$ with Equations~(\ref{eq:table_zero}-\ref{eq:pack});\;
		}
		\nl $p = p + 1$
	}	
	\nl $p = 0$;\;
}
\nl $D = \tablebot{\lambda,K,\lambda T, T}$;\;
\nl \lIf{$D \ge d_K$}{
	\Return \textsc{Recover}$(\lambda,K,\lambda T, T)$
}
\nl \Return NO
\caption{Exact algorithm \textsc{Bot-ProgDyn} which solves $\MbotOnel{\lambda}{T}$}
\label{algo:progdyn}
\end{algorithm}

We remind that if $\tablebot{\lambda,K,\lambda T,T} \ge 0$, then $\bm{\pi}[\lambda,K,\lambda T,T]$ exists and it gives us some $\bm{\pi}(\lambda)$ if and only if the demand $d_K$ is achieved. Otherwise, if $\tablebot{\lambda,K,\lambda T,T} = -\infty$, then there is no packing $\bm{\pi}(\lambda)$ for sure. As a consequence, the following algorithm solves $\MbotOnel{\lambda}{T}$: first compute simultaneously the three tables \texttt{Table}, \texttt{Bool} and \texttt{Pack}, second answer NO if $\tablebot{\lambda,K,\lambda T,T} = -\infty$, otherwise retrieve packing $\bm{\pi}[\lambda,K,\lambda T,T]$ recursively with sub-routine \textsc{Recover}. If $\bm{\pi}[\lambda,K,\lambda T,T]$ satisfies demand $d_K$, then return it as $\bm{\pi}(\lambda)$, else answer NO. We call this algorithm \textsc{Bot-ProgDyn} (Algorithm~\ref{algo:progdyn}).

\begin{theorem}
\textsc{Bot-ProgDyn} solves $\MbotOnel{\lambda}{T}$ with a running time $O(\lambda^3T^3K)$.
\label{th:dp_lambda}
\end{theorem}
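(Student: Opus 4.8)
The plan is to split the statement into a correctness claim and a complexity claim, and to lean on Lemma~\ref{le:table}, which already certifies that \texttt{Table}, \texttt{Bool} and \texttt{Pack} are filled consistently with the intended semantics of $\bm{\pi}[p,k,W,\tau]$. For correctness, I would first record the reduction inherent in the definition of the sub-packings: an instance of $\MbotOnel{\lambda}{T}$ is feasible exactly when $\bm{\pi}[\lambda,K,\lambda T,T]$ exists and meets demand $d_K$, and Lemma~\ref{le:table} turns this into the single test $\tablebot{\lambda,K,\lambda T,T}\ge d_K$. This justifies the final branch of Algorithm~\ref{algo:progdyn}, showing that it answers NO if and only if no feasible packing exists.

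Next I would verify that \textsc{Recover} reconstructs a genuine packing realizing the recorded value. This is an induction mirroring the recursion of Equation~\eqref{eq:table_rec}: the stored entry $\packbot{p,k,W,\tau}=j$ is by construction the number of $p$-resources of type $k$ used by $\bm{\pi}[p,k,W,\tau]$, so peeling off these $j$ resources and recursing on $\bm{\pi}[p-1,k,W-jp,S_{p,\lambda}(\tau,j)]$, together with the jump to $\bm{\pi}[\lambda,k-1,\cdot,\cdot]$ when $p$ reaches $0$, returns precisely the packing whose production is stored in \texttt{Table}. For optimality with respect to the objective $H$, I would exploit that the maximal production, hence feasibility, is monotone non-decreasing in the volume budget $W$: feasibility holds above a threshold $W^\dagger$ which is the minimum attainable volume, and $\bm{\pi}[\lambda,K,W^\dagger,T]$ has volume exactly $W^\dagger$, since a strictly smaller volume would contradict the threshold. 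Thus identifying the minimal feasible budget in the column $\tablebot{\lambda,K,\cdot,T}$ and recovering there yields a packing minimizing $H$ among all packings satisfying \eqref{ILP2.1}--\eqref{ILP2.4}.

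For the complexity I would simply count. The vector \texttt{Table} (and likewise \texttt{Bool}, \texttt{Pack}) is indexed by $p\in\{0,\dots,\lambda\}$, $k\in\{1,\dots,K\}$, $W\in\{0,\dots,\lambda T\}$ and the half-integer $\tau\in\{0,\dots,T\}$, hence has $O(\lambda^2 T^2 K)$ entries. Evaluating one entry via Equation~\eqref{eq:table_rec} is a maximum over the choices $0\le j\le\lfloor W/p\rfloor\le \lambda T$, so it costs $O(\lambda T)$; every auxiliary evaluation ($S_{p,\lambda}$, the \texttt{Bool} guard, the base case~\eqref{eq:tab_base}) is $O(1)$ or already dominated. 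Multiplying the number of cells by the per-cell cost gives $O(\lambda^2 T^2 K)\cdot O(\lambda T)=O(\lambda^3 T^3 K)$, and since \textsc{Recover} only walks a path of length $O(\lambda K)$ through \texttt{Pack}, its cost is negligible.

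I expect the main obstacle to be the optimality argument rather than the raw counting: one must be careful that the dynamic program maximizes production at a fixed volume budget, whereas the problem minimizes volume, so the equivalence hinges on the monotonicity of feasibility in $W$ and on arguing that the recovered packing at the minimal feasible budget wastes no volume. The running-time bound itself is a routine product of the table size and the $O(\lambda T)$ range of the inner index $j$; the only mild care needed is to confirm that the half-integer grid for $\tau$ contributes merely a constant factor.
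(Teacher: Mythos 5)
Your proof is correct and follows the same two-part structure as the paper's: correctness is delegated to Lemma~\ref{le:table} via the single test $\tablebot{\lambda,K,\lambda T,T}\ge d_K$, and the running time is the product of the $O(\lambda^2T^2K)$ table size with the $O(\lambda T)$ cost of the inner maximization over $j$ in Equation~\eqref{eq:table_rec}, with \textsc{Recover} negligible. The one place you go beyond the paper is the optimality argument: you observe that the DP maximizes production at a fixed volume budget while the problem asks to \emph{minimize} the volume $H$, and you bridge the two by monotonicity of feasibility in $W$ together with a scan for the minimal feasible budget $W^\dagger$ in the column $\tablebot{\lambda,K,\cdot\,,T}$, noting that the packing recovered there has volume exactly $W^\dagger$ since a smaller volume would witness feasibility at a smaller budget. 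The paper's proof does not address this point (it simply asserts that the algorithm ``returns the solution''), and Algorithm~\ref{algo:progdyn} as written recovers at the full budget $W=\lambda T$, which yields a feasible packing but not obviously a volume-minimal one; volume-minimality is in fact used later (in the proof of Theorem~\ref{th:opt_packings}, to get $\volit(\bm{\pi}(H^*))\le \volit(\Piopt)$). So your extra step is not redundant but is the argument needed to make the minimization claim literally true, and it adds only an $O(\lambda T)$ scan that does not affect the stated complexity.
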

\begin{proof}
From Lemma~\ref{le:table}, we know that $\tablebot{\lambda,K,\lambda T, T} \ge d_K$ if and only if $\bm{\pi}(\lambda)$ exists. Therefore, as \textsc{Bot-ProgDyn} achieves the dynamic programming to fills \texttt{Table}, it returns the solution of $\MbotOnel{\lambda}{T}$. 

We focus now on the running time of \textsc{Bot-ProgDyn}. The dyanmic programming routine uses a memory space of size $O(\lambda^2T^2K)$, and each computation uses at most $\lambda T$ recursive calls: the worst case occurs with Equation~\eqref{eq:table_rec} and the computation of $\tablebot{\lambda,K,\lambda T, T}$. Consequently, the number of comparisons/affectations/arithmetic operations is $O(\lambda^3T^3K)$.

Furthermore, the time needed to apply \textsc{recover} is negligible compared to the latter. Indeed, the number of recursive calls needed to compute $\bm{\pi}(\lambda,K,\lambda T, T)$ is at most the size of vector \texttt{Pack}, which is $O(\lambda^2T^2K)$.
\end{proof}

\subsection{Optimum packing with unlimited scale} \label{subsec:opt_inf}

We focus on problem $\MbotOne$, which is equivalent to $\MbotOnel{\infty}{T}$ ($T$ has no influence here). In other words, we aim at producing the packing $\bm{\pi}(\infty)$ which will be denoted $\bm{\pi}$ in this subsection to simplify notations.

Our reasoning is based on the result stated in Lemma~\ref{le:optim_config}. We will assume that the packing $\bm{\pi} = (x_{pk})_{p,k}$ admits low values for non-optimum configurations, {\em i.e.} $x_{pk} \le p_0(k)$ for any $k \in \mathcal{K}$ and $p \neq p_0(k)$ (consequence of Lemma~\ref{le:optim_config}). This packing $\bm{\pi}$ can be decomposed into two parts: on one hand, the values $x_{p_0(k),k}$ for optimal configurations and, on the other hand, a packing $\bm{\pi}' = (x_{pk}')_{p,k}$ taking only values for non-optimum configurations. Formally,

\begin{equation}
x_{pk}' = \left\{
\begin{split}
& x_{pk} & \mbox{if}~ k \in \K, p \in \p, p\neq p_0(k)\\
& 0  & \mbox{if}~ k \in \K, p = p_0(k)\\
\end{split}
\right.
\nonumber
\end{equation}

This sub-packing, as expected, has a small volume. Let $W(P) = P\frac{P(P-1)}{2}$. We have $\vol(\bm{\pi}') = \sum_{p,k} p\cdot x_{pk}' \le \sum_{k} \sum_{p \neq p_0(k)} p\cdot p_0(k) \le K\cdot W(P).$
More precisely, the volume of $\bm{\pi}'$ dedicated to each type of job is at most $W(P)$.

Our construction of packing $\bm{\pi}$ is inductive. For any $1\le k\le K$, we denote by $\bm{\pi}_j = (x_{pk}^{(j)})_{p,k}$ a packing satisfying the following properties:
\begin{itemize}
\item $\bm{\pi}_j$ satisfies the demands for all types of jobs,
\item for any $1\le i\le j$, $\bm{\pi}_j$ uses the minimum number of resources to satisfy demand $d_i$ with all configurations allowed,
\item for any $j < i \le K$, uses the minimum number of resources to satisfy demand $d_i$ with only configuration $p_0(j)$ allowed.
\end{itemize}

First, observe that $\bm{\pi}_K$ is a solution of $\MbotOne$ since it satisfies all demands while minimizing the total volume of the packing. Therefore, we can state $\bm{\pi} = \bm{\pi}_K$. Furthermore, $\bm{\pi}_0$ is a packing satisfying all demands which minimizes the total volume by using only optimal configurations for each type of jobs. It can be determined analytically and this will be the base case of our induction. For any $1\le k\le K$,
$$x_{p_0(k),k} = \left\lceil \frac{d_k}{c_{p_0(k),k}} \right\rceil$$

By definition, all other components of packing $\bm{\pi}_0$ are zero. Below, we proceed the inductive step.
\begin{lemma}
Assume packing $\bm{\pi}_j$ is known. One can build packing $\bm{\pi}_{j+1}$ in time $O(W(P)^2P)$ using the algorithm \textsc{Bot-ProgDyn}.
\label{le:induction}
\end{lemma}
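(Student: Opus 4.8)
The key observation is that $\bm{\pi}_{j+1}$ may be taken to coincide with $\bm{\pi}_j$ on every column except the one of type $j+1$. Indeed, by definition the two packings treat types $i \le j$ and types $i > j+1$ in exactly the same way, and they differ only in that type $j+1$ passes from being served by the single configuration $p_0(j+1)$ to being served optimally by all configurations. I would therefore reduce the construction of $\bm{\pi}_{j+1}$ to the single-type problem of minimizing $\sum_{p} p\cdot x_{p,j+1}$ subject to $\sum_{p} c_{p,j+1}\cdot x_{p,j+1} \ge d_{j+1}$, and then copy this solution into the type-$(j+1)$ column of $\bm{\pi}_j$; the remaining columns already satisfy their demands, so the resulting packing is a valid $\bm{\pi}_{j+1}$.

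To make this single-type problem polynomial, I would invoke Lemma~\ref{le:optim_config}. Its exchange argument, applied to the single type $j+1$, provides an optimum in which $x_{p,j+1} \le p_0(j+1) \le P$ for every $p \neq p_0(j+1)$; hence the volume carried by the non-optimal configurations is at most $W(P)$, exactly the per-type bound already recorded above. I would thus look for the type-$(j+1)$ column under the form of a non-optimal part of volume at most $W(P)$, completed by an appropriate number of copies of $p_0(j+1)$ that absorb the bulk of the demand.

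For the non-optimal part I would run \textsc{Bot-ProgDyn} on a one-type instance (the type being $j+1$) from which the configuration $p_0(j+1)$ has been removed, with the volume index capped at $W(P)$ and with the scale constraint rendered inactive (choosing the parameter so that every configuration counts as small makes \eqref{ILP2.4} vacuous and leaves the $\tau$-budget unchanged along the recursion, so that a single scale value is relevant). The table then yields, for each volume level $V \le W(P)$, the maximum production $\rho(V)$ reachable with non-optimal configurations within that volume. Since $p_0(j+1)$ is by definition the most resource-efficient configuration, the cheapest completion of a part producing $\rho$ is analytic, namely $p_0(j+1)\cdot \bigl\lceil \max(0,\,d_{j+1}-\rho)/c_{p_0(j+1),j+1}\bigr\rceil$ resources. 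I would scan the $O(W(P))$ volume levels, retain the $V$ minimizing $V + p_0(j+1)\bigl\lceil \max(0,\,d_{j+1}-\rho(V))/c_{p_0(j+1),j+1}\bigr\rceil$, and recover the corresponding column.

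The delicate point, on which I would spend the most care, is correctness, i.e. that this restricted search truly returns a global optimum of the single-type problem. Two facts have to be combined: that bounding the non-optimal volume by $W(P)$ discards no optimum, which is precisely Lemma~\ref{le:optim_config}; and that, for a fixed non-optimal part, greedily filling the residual demand with $p_0(j+1)$ is optimal, which holds because for a fixed non-optimal volume a larger production is never harmful, so maximizing $\rho(V)$ per level and then minimizing over $V$ attains the optimum of the tradeoff. For the complexity, the specialized table has a single type, a single relevant scale value, a configuration index ranging over at most $P$ values and a volume index over $O(W(P))$ values, hence $O(P\,W(P))$ cells each computed from at most $O(W(P)/p) \le O(W(P))$ recursive calls; this gives the announced $O(W(P)^2 P)$, and the final scan over the $O(W(P))$ volume levels is dominated by it.
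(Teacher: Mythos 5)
Your proposal is correct and follows essentially the same route as the paper: decompose the packing by job type, use the exchange argument of Lemma~\ref{le:optim_config} to cap the non-optimal volume at $W(P)$, run \textsc{Bot-ProgDyn} on a one-type instance with the optimal configuration disabled, and scan the $O(W(P))$ volume levels before completing analytically with copies of $p_0(j+1)$. Your minimization objective $V + p_0(j+1)\bigl\lceil \max(0,\,d_{j+1}-\rho(V))/c_{p_0(j+1),j+1}\bigr\rceil$ is in fact the more careful version, since it accounts for the $p_0(j+1)$ resources carried by each copy of the optimal configuration, a factor the paper's Equation~\eqref{eq:min_volume} omits.
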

\begin{proof}
We remind that we are working on some instance $\mathcal{I}$ of $\MbotOne = \MbotOnel{\infty}{0}$ (we remind that when $\lambda = \infty$, then the second parameter has no influence on the definition of the problem, so we put 0 arbitrarily). We define another instance $\mathcal{J}$ of $\MbotOnel{\infty}{0}$. Instance $\mathcal{J}$ contains only one job type, which corresponds to the jobs of type $j+1$ in $\mathcal{I}$. Then, the capacities in $\mathcal{J}$ are exactly the capacities $c_{p(j+1)}$ of jobs of type $j +1$ in $\mathcal{I}$, except for the optimal configuration for which we fix $c_{p_0(j+1),j+1} = 0$. Concretely, $p_0(j+1)$ will not be used in the solutions of $\mathcal{J}$. We do not need to fix any demand constraint because, with only one job type, it will not intervene in algorithm \textsc{Bot-ProgDyn}.

Using \textsc{Bot-ProgDyn} on instance $\mathcal{J}$, we build vectors \texttt{Table} and \texttt{Pack} (with exactly one job type, \texttt{Bool} is not necessary) with limited volume $W(P)$. In brief, we stop the procedure when $\tablebot{P,1,W(P),0}$ and $\packbot{P,1,W(P),0}$ are obtained. Thanks to these vectors, we are able to obtain, for any volume $0\le W\le W(P)$, a packing which processes only jobs of type $j+1$, with volume at most $W$, and which maximizes the production.

By induction hypothesis, we know that packing $\bm{\pi}_j$ already minimizes the volume used to process the necessary demand for jobs of type $1,\ldots,j$. Thus, we focus only on jobs of type $j+1$. Moreover, the volume $W$ used to process jobs of type $j+1$ in packing  $\bm{\pi}_{j+1}$ with non-optimal configurations does not exceed $W(P)$ (consequence of Lemma~\ref{le:optim_config}). Hence, we compute for all possible volumes $0 \le W \le W(P)$, the packing which maximizes the production of jobs of type $j+1$ while maintaining at most volume $W$. The production of such packing is given by $\tablebot{P,1,W,0}$. It suffices to guess the volume $W$ taken by non-optimal configurations, to compute the packing maximizing the production  of jobs of type $j+1$ with this volume and, eventually, to add optimal configurations $p_0(j+1)$ while the demand is not satisfied. At least one of these volumes $W$ will provide us with a packing $\bm{\pi}_{j+1}$ satisfying the properties stated above. Formally, we define:
\begin{equation}
W_{j+1} = \min_{0\le W\le W(P)} W + \left\lceil \frac{d_{j+1} - \tablebot{P,1,W,0}}{c_{p_0(j+1),(j+1)}} \right\rceil.
\label{eq:min_volume}
\end{equation}
The right-hand side of Equation~\eqref{eq:min_volume} is the total volume used for processing jobs of type $j+1$ when the demand has to be satisfied and also when exactly $W$ resources are assigned to non-optimum configurations.

Once the optimum volume $W_{j+1}$ for non-optimum configurations was determined, we fix, for the optimal configuration $p_0(j+1)$:
$$x_{p_0(j+1),j+1}^{(j+1)} = \left\lceil \frac{d_{j+1} - \tablebot{P,1,W_{j+1},0}}{c_{p_0(j+1),(j+1)}} \right\rceil$$
For $p \neq p_0(j+1)$, the number of $p$-resources assigned for the process of jobs of type $j+1$ is directly given by the packing which can be recovered from $\packbot{P,1,W_{j+1},0}$.
\end{proof}

\begin{theorem}
\MbotOne\ can be solved in time $O(KP^7)$. 
\label{th:packing_inf}
\end{theorem}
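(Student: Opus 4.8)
The plan is to prove that \MbotOne\ can be solved in time $O(KP^7)$ by assembling the pieces developed in Section~\ref{subsec:opt_inf}. Recall that \MbotOne\ is equivalent to $\MbotOnel{\infty}{0}$, and that we wish to construct the packing $\bm{\pi} = \bm{\pi}_K$. The strategy is a clean induction: start from the analytically-determined base packing $\bm{\pi}_0$ (which uses only optimal configurations $p_0(k)$ and is obtained by the closed-form $x_{p_0(k),k} = \lceil d_k / c_{p_0(k),k} \rceil$), and then apply the inductive step from Lemma~\ref{le:induction} exactly $K$ times, once for each job type, to successively transform $\bm{\pi}_0$ into $\bm{\pi}_1, \bm{\pi}_2, \ldots, \bm{\pi}_K$.

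First I would argue correctness. By construction, $\bm{\pi}_0$ satisfies all demands and uses only optimal configurations. Lemma~\ref{le:induction} guarantees that, given $\bm{\pi}_j$, we can build $\bm{\pi}_{j+1}$ which for every type $i \le j+1$ uses the minimum number of resources (with all configurations allowed), and for $i > j+1$ continues to use only the optimal configuration. After $K$ iterations we reach $\bm{\pi}_K$, which by the first bullet of the defining properties satisfies all demands and minimizes the total volume among packings respecting the structure of Lemma~\ref{le:optim_config}. Since Lemma~\ref{le:optim_config} tells us that an optimum packing of \MbotOne\ may be assumed to satisfy $x_{pk} \le p_0(k) \le P$ for $p \neq p_0(k)$, the packing $\bm{\pi}_K$ is indeed an optimal solution of \MbotOne .

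Next I would handle the running time, which is the core of the theorem. Each inductive step costs $O(W(P)^2 P)$ by Lemma~\ref{le:induction}, where $W(P) = P \cdot \frac{P(P-1)}{2} = O(P^3)$. Substituting, one step costs $O\big((P^3)^2 \cdot P\big) = O(P^7)$. Since we perform exactly $K$ such steps (for $j = 0, 1, \ldots, K-1$), the total cost is $K \cdot O(P^7) = O(KP^7)$. I would also note that the base case $\bm{\pi}_0$ is computed in negligible time $O(K)$ via the closed-form expression, and that recovering the final packing from the \texttt{Pack} vectors across all iterations contributes a lower-order term, so the dominant cost is indeed $O(KP^7)$.

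The main subtlety to watch is ensuring the running-time bound from Lemma~\ref{le:induction} is invoked consistently: the lemma applies \textsc{Bot-ProgDyn} on a single-job-type instance $\mathcal{J}$ with volume capped at $W(P)$ rather than at $\lambda T$, so the generic bound $O(\lambda^3 T^3 K)$ from Theorem~\ref{th:dp_lambda} does not apply directly; instead the effective parameters are $\lambda = P$, one job type, and volume $W(P)$, which is precisely what yields the $O(W(P)^2 P)$ figure. I do not expect a genuine obstacle here, since both the correctness and the arithmetic follow mechanically from the already-established lemmas; the proof is essentially a matter of stating the induction cleanly and multiplying the per-step cost by $K$.
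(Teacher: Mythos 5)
Your proposal is correct and follows essentially the same route as the paper: compute $\bm{\pi}_0$ analytically, apply Lemma~\ref{le:induction} for each of the $K$ job types, and return $\bm{\pi}_K$, with the per-step cost $O(W(P)^2P)=O(P^7)$ multiplied by $K$ iterations. Your added remark that the generic $O(\lambda^3T^3K)$ bound must be instantiated with the single-job-type, volume-$W(P)$ parameters is a fair clarification of a point the paper glosses over, but it is not a different argument.
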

\begin{proof}
We state the full inductive process here. First, we compute packing $\bm{\pi}_0$ with analytical formulas in time $O(KP)$. Then, we construct inductively all packings $\bm{\pi}_j$ for $1\le j\le K$ thanks to Lemma~\ref{le:induction}. Eventually, packing $\bm{\pi}_K$ provides us with a solution of \MbotOne . As there are exactly $K$ inductive steps and each step runs in $O(P^7)$ according to Theorem~\ref{th:dp_lambda}, the total running time for computing $\bm{\pi}(\infty)$ is $O(KP^7)$.
\end{proof}

\subsection{Computation of collection $\Pi$} \label{subsec:proof}

Combining the results obtained in both Sections~\ref{subsec:dp} and~\ref{subsec:opt_inf}, we are now ready for the computation of the whole collection $\Pi$ of packings which will allow us to obtain a $\frac{4}{3}$-approximation for \Mbot .

\medskip

\textit{Proof of Theorem~\ref{th:collection}.}
According to Theorem~\ref{th:dp_lambda}, each packing $\bm{\pi}(\lambda)$, $1\le\lambda \le 3P-1$, can be produced in time $O(\lambda^3T^3K)$ (or the algorithm warns us that such packing does not exist). Then, according to Theorem~\ref{th:packing_inf}, packing $\bm{\pi}(\infty)$ can be computed in time $O(KP^7)$. Consequently, we build the collection described in Equation~\eqref{eq:collection} with the time announced.

By definition, any packing $\bm{\pi}(\lambda)$, if it exists, satisfies all demands (Problem~\ref{pb:multibot1}). Packing $\bm{\pi}(\infty)$, which necessarily exists, is the packing which satisfies all demands while minimizing the volume. As $\Piopt$ is a (not necessarily optimal) solution of \MbotOne, then its volume is at least $\vol(\bm{\pi}(\infty))$. Finally, if $H^* < 3P$, then packing $\bm{\pi}(H^*)$ exists since $\Piopt$ is a solution of $\MbotOnel{H^*}{T}$: it satisfies all demands, its volume is at most $TH^*$, its maximum at most $H^*$ and its scale is at most $T$.
\qed

\section{Conclusion and perspectives}

In this article, we present a $\frac{4}{3}$-approximation for a combinatorial problem - generalizing high-multiplicity \IMS\ - which consists in finding the optimum schedule for a fleet of reconfigurable robots given some demands for all types of processed jobs. We believe that this problem, we called \Mbot , could describe many situations involving reconfigurable systems, such as work plannings for teams of employees for example. Therefore, our approximation algorithm is, in our opinion, not only an important achievement for the industrial use of reconfigurable robots but also for other scheduling areas. Indeed, it offers the theoretical guarantee to be close to the optimum schedule, and in practice the performance should be in fact much smaller than this upper bound. 

A drawback of our algorithm could be the exponent 7 over parameter $P$ in its running time. For the reconfigurable robots application, it is not since in practice $P$ is smaller than $15$ (no more than 14 elementary robots of MecaBotiX can be assembled together). Nevertheless, this time complexity has to be taken into account and perhaps some improvements could be proposed.

A natural question is whether this approximation factor $\frac{4}{3}$ can be improved. The answer is yes, as we are currently writing a PTAS for \Mbot . This is a very interesting theoretical result, as it shows us that any approximation factor can be achieved, but, at the same time, it does not offer a performing running time for industrial application. This contribution will be presented soon.

A possibility would be to design the same algorithm by replacing \lpt\ with another heuristic for \IMS , for example \textsc{Multifit}. Indeed, the approximation ratio of \textsc{Multifit} is $\frac{13}{11}$. However, the analysis of \textsc{Multifit} is much more involved~\cite{yue90} than the one for \lpt . As a consequence, it is more difficult to see which are the properties our packings should satisfy in order to obtain a lower approximation ratio for \Mbot . Nevertheless, it will be the next direction of research to look at on this subject.

\bibliographystyle{plain}
\bibliography{biblio}

\end{document}